\newtheorem{theorem}{Theorem}
\newtheorem{corollary}{Corollary}[theorem]
\newacronym{GMM}{GMM}{Gaussian mixture model}
\newacronym{PDF}{PDF}{probability density function}
\newacronym{MSE}{MSE}{mean square error}
\newacronym{CSI}{CSI}{channel state information}
\newacronym{CME}{CME}{conditional mean estimator}
\newacronym{ML}{ML}{maximum likelihood}
\newacronym{LS}{LS}{least squares}
\newacronym{LOS}{LoS}{line-of-sight}
\newacronym{NLOS}{NLoS}{non-\ac{LOS}}
\newacronym{DoA}{DoA}{direction-of-arrival}
\newacronym{SNR}{SNR}{signal-to-noise ratio}
\newacronym{BS}{BS}{base station}
\newacronym{JCAS}{JCAS}{joint communication and sensing}
\newacronym{MIMO}{MIMO}{multiple-input-multiple-output}
\newacronym{SIMO}{SIMO}{single-input-multiple-output}
\newacronym{MMSE}{MMSE}{minimum \ac{MSE}}
\newacronym{NMSE}{NMSE}{normalized \ac{MSE}}
\newacronym{RMSE}{RMSE}{root \ac{MSE}}
\newacronym{LMMSE}{LMMSE}{linear \ac{MMSE}}
\newacronym{MT}{MT}{mobile terminal}
\newacronym{UE}{UE}{user equipment}
\newacronym{OMP}{OMP}{orthogonal matching pursuit}
\newacronym{CS}{CS}{compressed sensing}
\newacronym{ULA}{ULA}{uniform linear array}
\newacronym{URA}{URA}{uniform rectangular array}
\newacronym{DFT}{DFT}{discrete Fourier transform}
\newacronym{MUSIC}{MUSIC}{multiple signal classification}
\newacronym{ESPRIT}{ESPRIT}{estimation of signal parameters via rotational invariance techniques}
\newacronym{GE}{GE}{gridded estimator}
\newacronym{AWGN}{AWGN}{additive white Gaussian noise}
\newacronym{GSC}{GSC}{generalized sidelobe cancellor}
\newacronym{EM}{EM}{expectation maximization}
\newacronym{MP}{MP}{message passing}
\newacronym{VAE}{VAE}{variational autoencoder}
\newacronym{MFA}{MFA}{mixtures of factor analyzers}
\newacronym{ELBO}{ELBO}{evidence-lower bound}
\newacronym{KL}{KL}{Kullback-Leibler}
\newacronym{DNN}{DNN}{deep neural network}
\newacronym{CGLM}{CGLM}{conditionally Gaussian latent model}
\newacronym{CRB}{CRB}{Cramer-Rao bound}
\newacronym{BCRB}{BCRB}{Bayesian Cramer-Rao bound}
\newacronym{TDS}{TDS}{time domain synchronous}
\newacronym{OFDM}{OFDM}{orthogonal frequency division multiplexing}
\newacronym{OTFS}{OTFS}{orthogonal time frequency space}
\newacronym{MAP}{MAP}{maximum a-posteriori}
\newacronym{PSD}{PSD}{positive semidefinite}
\newacronym{wlog}{w.l.o.g.}{without loss of generality}
\newacronym{PAST}{PAST}{projection approximation subspace tracking}
\newacronym{wrt}{w.r.t.}{with respect to}
\newacronym{SVD}{SVD}{singular value decomposition}
\newcommand{\va}{{\bm{a}}}
\newcommand{\vc}{{\bm{c}}}
\newcommand{\vd}{{\bm{d}}}
\newcommand{\vh}{{\bm{h}}}
\newcommand{\vn}{{\bm{n}}}
\newcommand{\vs}{{\bm{s}}}
\newcommand{\vv}{{\bm{v}}}
\newcommand{\vx}{{\bm{x}}}
\newcommand{\vy}{{\bm{y}}}
\newcommand{\vz}{{\bm{z}}}
\newcommand{\ma}{{\bm{A}}}
\newcommand{\mb}{{\bm{B}}}
\newcommand{\mc}{{\bm{C}}}
\newcommand{\md}{{\bm{D}}}
\newcommand{\mh}{{\bm{H}}}
\newcommand{\mn}{{\bm{N}}}
\newcommand{\matp}{{\bm{P}}}
\newcommand{\mq}{{\bm{Q}}}
\newcommand{\mr}{{\bm{R}}}
\newcommand{\ms}{{\bm{S}}}
\newcommand{\mv}{{\bm{V}}}
\newcommand{\mw}{{\bm{W}}}
\newcommand{\mx}{{\bm{X}}}
\newcommand{\my}{{\bm{Y}}}
\newcommand{\vth}{{\bm{\theta}}}
\newcommand{\vsig}{\bm{\sigma}}
\newcommand{\vphi}{{\bm{\phi}}}
\newcommand{\vmu}{\bm{\mu}}
\newcommand{\vdel}{{\bm{\delta}}}
\newcommand{\tr}{\mathrm{tr}}
\newcommand{\diag}{\mathrm{diag}}
\newcommand{\range}{\mathrm{range}}
\newcommand{\mse}{\mathrm{MSE}}
\newcommand{\proj}{{\mathrm{proj}}}
\newcommand{\sub}{{\mathrm{sub}}}
\newcommand{\He}{\mathrm{H}}
\newcommand{\T}{\mathrm{T}}
\newcommand{\I}{\mathbf{I}}
\renewcommand{\j}{\mathrm{j}}
\newcommand{\E}{\mathbb{E}}
\newcommand{\NC}{\mathcal{N}_\mathbb{C}}
\tikzstyle{block} = [draw, rectangle, 
\tikzstyle{input} = [coordinate]
\tikzstyle{output} = [coordinate]
\tikzstyle{pinstyle} = [pin edge={to-,thin,black}]
\tikzset{radiation/.style={{decorate,decoration={expanding waves,angle=90,segment length=5pt}}}}
\pgfplotsset{compat=newest}
\tikzset{PlotML/.style={mark=star,mark size=2.2pt, line width=1pt, color=gray, dashed, mark options=solid}}
\tikzset{PlotProjsCov/.style={mark=triangle,mark size=1.5pt, line width=1pt, color=green!40!black, dashed, mark options={solid, rotate=180}}}
\tikzset{PlotSubsCov/.style={mark=diamond,mark size=1.8pt, line width=1pt, dashed, color=green!50!teal, mark options=solid}}
\tikzset{PlotGMM/.style={mark=o, mark size=1.5pt,domain=1:10000, line width=1pt, color=orange, dashed, mark options=solid}}
\tikzset{PlotGMMHard/.style={mark=diamond, mark size=1.8pt,domain=1:10000, line width=1pt, color=purple, dashed, mark options=solid}}
\tikzset{PlotProjGMM/.style={mark=triangle,mark size=1.5pt, line width=1pt, color=blue, mark options=solid}}
\tikzset{PlotSubGMM/.style={mark=square,mark size=1.5pt, line width=1pt, color=cyan, mark options=solid}}
\tikzset{PlotEM/.style={mark=asterisk, mark size=2.2pt, line width=1pt, color=yellow!50!orange, dashed, mark options=solid}}
\tikzset{PlotMP/.style={mark=x, mark size=2.2pt, line width=1pt, color=magenta, dashed, mark options=solid}}
\tikzset{PlotProjVAE/.style={mark=triangle,mark size=1.5pt, line width=1pt, color=violet, mark options={solid, rotate=270}}}
\tikzset{PlotSubVAE/.style={mark=pentagon,mark size=1.8pt, line width=1pt, color=red, mark options={solid, rotate=0}}}
\tikzset{PlotLS/.style={mark=|,mark size=2pt,line width=1pt, color=black, dashed, mark options=solid}}
\tikzset{PlotVAE/.style={mark=triangle,mark size=1.5pt, line width=1pt, color=olive, dashed, mark options={solid, rotate=90}}}
\def\BibTeX{{\rm B\kern-.05em{\sc i\kern-.025em b}\kern-.08em
		T\kern-.1667em\lower.7ex\hbox{E}\kern-.125emX}}
\newcommand{\update}[1]{\textcolor{black}{#1}}
\begin{document}

\title{
Semi-Blind Strategies for MMSE Channel Estimation Utilizing Generative Priors
}
\author{\IEEEauthorblockN{Franz Weißer,~\IEEEmembership{Graduate Student Member,~IEEE,} Nurettin Turan,~\IEEEmembership{Graduate Student Member,~IEEE,} \\Dominik Semmler,~\IEEEmembership{Graduate Student Member,~IEEE,} Fares Ben Jazia, and Wolfgang Utschick,~\IEEEmembership{Fellow,~IEEE}%
\thanks{This work was supported by the Federal Ministry of Education and Research of Germany in the programme of “Souverän. Digital. Vernetzt.”. Joint project 6G-life, project identification number: 16KISK002.
An earlier version of this work was presented at ICASSP’24\cite{Weisser2024}.
}
\thanks{The authors are with the TUM School of Computation, Information and Technology, Technical University of Munich, 80333 Munich, Germany (e-mail: franz.weisser@tum.de).}}
}

\maketitle

\thispagestyle{cfooter}

\begin{abstract}
This paper investigates semi-blind channel estimation for massive multiple-input multiple-output (MIMO) systems.
To this end, we first estimate a subspace based on all received symbols (pilot and payload) to provide additional information for subsequent channel estimation.
This additional information enhances minimum mean square error (MMSE) channel estimation.
Two variants of the linear MMSE (LMMSE) estimator are formulated, where the first one solves the estimation within the subspace, and the second one uses a subspace projection as a preprocessing step.
Theoretical derivations show the latter method's superior estimation performance in terms of mean square error for uncorrelated Rayleigh fading.
\update{Further, we provide asymptotical insights on how the proposed MMSE-based channel estimation strategy outperforms the unbiased Cramer-Rao bound.}
Subsequently, we introduce parameterizations of these semi-blind LMMSE estimators based on two different conditional Gaussian latent models, i.e., the Gaussian mixture model and the variational autoencoder.
Both models learn the propagation environment's underlying channel distribution based on training data 
and serve as generative priors for our semi-blind channel estimation.
Extensive simulations for real-world measurement data and spatial channel models show the proposed methods' superior performance compared to state-of-the-art semi-blind channel estimators in terms of MSE.
\end{abstract}

\begin{IEEEkeywords}
	Semi-blind channel estimation, Gaussian mixture model, variational autoencoder, measurement data.
\end{IEEEkeywords}

\section{Introduction}

\IEEEPARstart{A}{ccurate} \ac{CSI} is crucial for achieving the expected high data rates promised by \ac{MIMO} systems~\cite{Rusek2013,Kabalci2019,Ye2018}. 
The \ac{CSI} describes the communication link between transmitter and receiver, characterized by its time-varying and frequency-selective nature, which is prone to rapid changes, making the task of channel estimation complex~\cite{Shang2021}.
As accurate channel estimates are essential for successfully transmitting data, it is at the center of several research efforts~\cite{Harkat2022,Smera2021}.

The most widely adopted methods in wireless communications utilize known training or pilot symbols
transmitted across the channel using some of the radio resource blocks~\cite{Marzetta2006}. 
Afterward, the receiver uses the observed signals to determine a reliable \ac{CSI} estimate. 
As the number of pilots scales with the number of users, the spectral efficiency decreases for a higher number of users as fewer symbols are available for transmitting data.
To enhance channel estimation without increasing the number of pilot symbols, various methods have been developed that leverage the information embedded in the observed data symbols at the receiver to infer channel characteristics~\cite{DeCarvalho1997a,DeCarvalho1997b,Medles2003,Ma2014,Joham1999,Neumann2015,Nayebi2018,Liu2009,Wu2016,Mehrotra2023,Osinsky2020,Liu2012,Park2015,Khan2023, Kim2023,Khan2024,Zilberstein2024}.
These methods exploit structure and redundancy within the transmitted data and yield more accurate \ac{CSI} estimates.

The benefit of semi-blind channel estimation was first studied in~\cite{DeCarvalho1997a}, where \acp{CRB} for blind, semi-blind, and training-based channel estimation were investigated in the context of \ac{SIMO} systems.
In~\cite{DeCarvalho1997b, Medles2003}, semi-blind channel estimation schemes were introduced based on \ac{ML} estimation. 
The respective estimators' asymptotic performances were also studied in~\cite{DeCarvalho1997b} for infinitely long data sequences.
Another asymptotic behavior, where the number of antennas grows to infinity, was studied in~\cite{Ma2014}. Here, the authors identified two interference components in semi-blind channel estimation, which do not vanish even for large numbers of antennas.
Early work on improving the \ac{LS} estimator using a semi-blind algorithm was conducted in~\cite{Joham1999}.
Furthermore, in~\cite{Neumann2015}, semi-blind and blind channel estimation was studied to enhance the \ac{MAP} channel estimates in massive \ac{MIMO} systems. 
These findings are based on favorable propagation, which only holds for large antenna arrays deployed at the \ac{BS}.
In~\cite{Nayebi2018}, two semi-blind channel estimators are studied based on the \ac{EM} algorithm. 
The assumption of a Gaussian distribution for the data was verified, leading to a closed-form solution for the E-step.
Another iterative framework optimizing the likelihood based on \ac{MP} is used in~\cite{Liu2009,Wu2016} and references therein.
In~\cite{Mehrotra2023}, a data-aided iterative scheme is proposed for \ac{OTFS} systems by employing affine-precoded superimposed pilots.
The performance improvement achieved with these iterative approaches generally requires computationally costly updates. 
A low complexity iterative \ac{LS} channel estimation algorithm is proposed in~\cite{Osinsky2020} for a massive \ac{MIMO} turbo-receiver.
In~\cite{Liu2012}, the concept of semi-blind channel estimation was adapted for time domain synchronous-\ac{OFDM} systems, where in addition to the pseudo noise sequence in the guard interval, the sent \ac{OFDM} data symbols are exploited for the channel estimation.
\update{In~\cite{Park2015,Khan2023,Kim2023}, different decision-directed frameworks were proposed, which treat reliably decoded data symbols as additional pilots.
All of these methods utilize iterative approaches, thus exhibiting large complexities.}
In~\cite{Khan2024}, peak-power carriers in an \ac{OFDM} system are selected to eliminate the need to determine reliable data symbols at the receiver.
Recently, a diffusion model-based approach for joint channel estimation and detection was proposed in~\cite{Zilberstein2024}, where a diffusion process is constructed that models the joint distribution of the channels and symbols given noisy observations.

\update{Many state-of-the-art estimators fall in the class of unbiased estimators, which are potentially \ac{MSE}-suboptimal.
Instead,} in this article, we focus on \update{general, potentially biased,} pilot-based estimators that minimize the \ac{MSE} \update{by utilizing prior information about the channel distribution,} and investigate how these estimators can be extended for the semi-blind case.
\update{Contrarily to previous works, e.g.,~\cite{Park2015,Khan2023,Kim2023}, we do not focus on using decoded data symbols as additional pilot but instead want to improve the estimators themselves.
Although both approaches can be combined.}
Notably, in~\cite{Deng2020} a subspace formulation of the \ac{MMSE} estimator was already used to mitigate pilot contamination in massive \ac{MIMO} systems.
The \ac{MMSE} estimator is known to be the \ac{CME}~\cite[Ch. 10]{Kay1993}, which, in general, is intractable and can not be computed in closed form.
Recently, powerful approximations based on machine learning were presented in~\cite{Yang2015,Neumann2018,Koller2022, Fesl2023c,Baur2024}.
The benefit of machine learning is to enhance the task at hand by using prior information captured during the learning stage.
For a given \ac{BS} cell environment, the \ac{PDF} representing potential user channels can be considered valuable prior information.
Since this true underlying distribution is unknown, machine learning methods rely on a representative data set, which is assumed to be available at the \ac{BS}.
Based on this data set, the user channels' \ac{PDF} can be learned.
The first proposal of using a \ac{GMM} to formulate an estimator was done in~\cite{Yang2015} for the case of image processing.
The approaches in~\cite{Koller2022, Fesl2023c,Baur2024} build on that by constructing a \ac{CGLM} for the \ac{PDF} of a BS cell environment.
The learned \ac{CGLM} not only enables \ac{MMSE} channel estimation in~\cite{Koller2022, Fesl2023c, Baur2024} but can also be used for, e.g., a limited feedback scheme as in~\cite{Turan2023}.
In this work, we propose to utilize \acp{CGLM} to parameterize the \ac{CME} in the semi-blind setting.

This work's contributions are summarized as follows:
\begin{itemize}
    \item 
    We introduce two variants of the \ac{LMMSE} estimator incorporating subspace knowledge provided by the payload data symbols.
    Firstly, we depict how the \ac{LMMSE} channel estimator can solve a subspace estimation problem~\cite{Deng2020}. 
    As an alternative, we propose a new projection method that is computationally more efficient since it allows for the pre-calculation of \ac{LMMSE} filters.
    \item 
    With theoretical derivations, we show the proposed projection method's superior \ac{MSE} performance in the case of uncorrelated Rayleigh fading and perfect subspace knowledge.
    \item \update{We provide an asymptotical analysis of how the projected \ac{LMMSE} outperforms the unbiased \ac{CRB} for perfect subspaces.
    These derivations highlight the \ac{MMSE}-based channel estimation's superiority compared to unbiased channel estimation in the case of semi-blind channel estimation.}
    \item 
    We show how the \ac{GMM}~\cite{Koller2022} and \ac{VAE}~\cite{Baur2024}, instances of the class of \acp{CGLM}, can be used to parameterize the semi-blind \ac{LMMSE} estimator.
    \item 
    Extensive simulations on different datasets, consisting of typical massive \ac{MIMO} systems with multiple users and real-world measurement data, show our proposed methods' superior performance compared to state-of-the-art semi-blind channel estimation algorithms in terms of the \ac{MSE}.
\end{itemize}

Preliminary results were presented in~\cite{Weisser2024} and extended to the multi-user \ac{MIMO} case in~\cite{Weisser2024a}, which we extend further in the following aspects.
The theoretical analyses in \cref{sec:perf} enhance the proposed semi-blind channel estimation strategies' foundation and provide analytic insights into the proposed projection method's superior performance.
We extend our concept of semi-blind \ac{MMSE} channel estimation to the whole class of \acp{CGLM}, providing a more general framework to parameterize the semi-blind \ac{LMMSE} estimator.
Finally, we provide more comprehensive simulation results to show our proposed strategies' strengths.

\emph{Notations:} Matrices and vectors are denoted with boldface symbols. $\bm{0}$ and $\I_N$ denote the zero vector of appropriate size and the identity matrix of size $N\times N$, respectively. $\E[\cdot]$, $\tr(\cdot)$, and $\range(\cdot)$ denote the expectation, trace, and range operators, respectively. We use $(\cdot)^\T$, $(\cdot)^\He$, $(\cdot)^{-1}$, \update{$(\cdot)^{\dagger}$} to denote the transpose, conjugate transpose, inverse, \update{and pseudo inverse}. $\|\cdot\|$ denotes the $2$-norm of a vector. $\NC(\vmu,\mc)$ denotes the circularly symmetric complex Gaussian distribution with mean $\vmu$ and covariance matrix $\mc$.

\section{System and Channel Model}
\label{sec:sys}
\vspace{-1pt}

\update{
We consider a multi-user system with $J$ single-antenna users and a \ac{BS} equipped with $M$ receive antennas.
We focus on the uplink scenario for this work, but the proposed methods can also be extended to the downlink.} 
The received signal vector at time instance $n$ is then
\begin{align}
	\vy(n) 
	&=  \mh \vx(n) + \vn(n) , \quad n=1,...,N, \label{eq:sys_model}
\end{align}
where $\vx(n) = [x_1(n),...,x_J(n)]^\T \in \mathbb{C}^J$ and $\vn(n) \in \mathbb{C}^M$ denote the signal sent by each of the $J$ users and the additive noise, respectively, whereas $\mh = [\vh_1, ..., \vh_J]$ contains the individual user channels $\vh_j \in \mathbb{C}^M$.
The case of multiple antennas at the users can be transformed into~\eqref{eq:sys_model} by viewing each active stream as a different user with its corresponding channel $\vh_j$ being the respective effective channel.
For further details on semi-blind channel estimation in multi-user \ac{MIMO}, we refer the reader to~\cite{Weisser2024a}.
\update{For the case of downlink semi-blind channel estimation, a subspace of 
the effective channel needs to be identifiable to improve estimation performance. This is the case if not all degrees of freedom are utilized, e.g., fewer pilots than receive antennas.}
We consider a channel coherence interval larger than the number of snapshots $N$, i.e., the channels are constant over all snapshots. 
We assume that the noise is Gaussian with $\vn(n) \sim \NC (\bm{0}, \mc_\vn = \sigma^2\I_M)$ \update{and the sent signals satisfy $\mathbb{E}[\vx(n)\vx^\He(n)]=\frac{1}{J}\I_J$}.

In conventional channel estimation schemes, each user's signals include $N_p$ uplink pilots. These pilots are known to the \ac{BS}. 
Hence, the received observations at the \ac{BS} side are 
\begin{align}
 \my= \left[\my_p^\prime, \my_d\right] = \mh\left[\matp, \md\right] +\mn= \mh\mx +\mn,
	\label{Eq:AllObservations}
\end{align}
where $\my\in\mathbb{C}^{M\times N}$, $\my^\prime_p\in\mathbb{C}^{M\times N_p}$, $\my_d\in\mathbb{C}^{M\times N-N_p}$, $\matp\in\mathbb{C}^{J\times N_p}$, and \update{$\md=[\vd_1,\dots,\vd_J]^\T$}
denote all received observations, received pilot observations, received payload data observations, sent pilots, and sent payload data symbols, respectively.
\update{The $j$-th user's payload symbols $\vd_j\in\mathbb{S}_j^{N-N_p}$ include realizations of the used transmit constellation, where $\mathbb{S}_j$ denotes the set of constellation points of user $j$.
In general, the employed user constellations can differ from each other.}
In order to fully illuminate the channels, the number of pilots is, at minimum, the number of users $N_p\geq J$, and orthogonal pilots are used.
We set \update{$\matp\matp^\He=\frac{N_p}{J}\I_M$}, and utilize \update{submatrices of the \ac{DFT} matrix of size $N_p\times N_p$}.
After decorrelating the orthogonal pilot sequences, the received pilot observations simplify to
\begin{align}
    \my_p= \my^\prime_p\matp^{\update{\dagger}}  = \mh \matp \matp^{\update{\dagger}} + \mn \matp^{\update{\dagger}} = \mh + \update{\sqrt{\frac{J}{N_p}}}\mn_p, \label{eq:pilotsys}
\end{align}
where $\mn_p$ has the same statistics as $\mn$ and, hence, we can omit the subscript.
This decorrelation lets us consider channel estimation from a per-user perspective in the subsequent discussions. 
For reasons of simpler readability, the index for the respective user is, therefore, no longer given in the following.
Consequently, we denote a user's pilot observation as 
\begin{align}
	\vy_p = \vh + \vn, \label{eq:pilotsys_su}
\end{align}
with \update{$\vn \sim \NC(\bm{0},\mc_\vn= \sigma^2_{\mathrm{eff}}\I_M)$, where 
$\sigma^2_{\mathrm{eff}}=\sigma^2\frac{J}{N_p}$
denotes the effective noise variance.}
\update{Further, we assume normalized channels as $\mathbb{E}[\|\vh\|^2]=M$, which lets us define the \ac{SNR} as $\sigma^{-2}$.}
\update{As the number of pilots $N_p$ results in a simple scaling of \ac{SNR} in the pilot transmission phase, we use $N_p=J$ in the rest of this work.}

\subsection{Spatial Channel Model}
\label{sec:3gpp}

We consider a spatial channel model based on~\cite{3GPP2020}, where the channel vectors are considered as conditionally Gaussian distributed~\cite{Neumann2018} 
\begin{align}
	\vh \mid \vdel \sim \NC \left(\bm{0}, \mc_\vdel\right),
\end{align}
based on a set of parameters $\vdel$, which describe the directions and properties of the multi-path propagation clusters.
The central angles are drawn independently from a uniform distribution in $[0,2\pi]$, and the path gains are independent zero-mean Gaussians.
The spatial covariance matrix is given by
\begin{align}
	\mc_\vdel = \int_{-\pi}^{\pi} g(\vartheta,\vdel) \va(\vartheta)\va^\He(\vartheta) \mathrm{d}\vartheta,
\end{align}
where $g(\vartheta,\vdel)$ is the power density consisting of a weighted sum of Laplace densities, which have standard deviations $\sigma_\text{AS}$ corresponding to the propagation clusters' angular spread.
The \ac{BS} employs a \ac{ULA} with $M$ antennas and $\lambda/2$ spacing.
The steering vector is then given as
\begin{align}
	\va(\vartheta) = \frac{1}{\sqrt{M}}\left[1, \mathrm{e}^{-\j\pi\sin(\vartheta)}, ..., \mathrm{e}^{-\j\pi(M-1)\sin(\vartheta)}\right]^\T. \label{eq:spaceSteering}
\end{align}
We consider a new $\vdel$ for every channel sample and draw the sample according to $\vh \sim \NC \left(\bm{0}, \mc_\vdel\right)$.

\subsection{Measurement Campaign}
\label{sec:meas}

Since synthetic data captures real-world \ac{CSI} characteristics only up to some extent, we utilize real-world data from a measurement campaign conducted at the Nokia campus in Stuttgart, Germany, in October/November 2017, 
cf.~\cite{Turan2022}. 
The \ac{BS} antenna with a \ac{URA} comprises $N_v=4$ vertical ($\lambda$ spacing) and $N_h=16$ horizontal ($\lambda/2$ spacing) single polarized patch antennas.
The operating carrier frequency is $2.18$ GHz, and the antenna was mounted on a rooftop approximately $20$ meters above the ground. 
For further details, we refer the reader to~\cite{Turan2022}. 

\section{Semi-Blind Channel Estimation using Perfect Statistical Knowledge}
\label{sec:perf}

In this section, we introduce two variants of the \ac{LMMSE} estimator incorporating information provided by the payload data symbols. 
To do so, we restrict ourselves \update{for now} to the case where perfect statistical knowledge is available at the receiver.

In channel estimation, commonly, only the pilot observation $\vy_p$ is considered for channel estimation.
The \ac{MSE}-optimal estimator given the pilot observation $\vy_p$ is the \ac{CME} 
\begin{align}
    \hat{\vh}_\text{CME} = \mathbb{E}\left[\vh\mid\vy_p\right]. \label{eq:cme_short}
\end{align}
If the channel sample $\vh$ is drawn from a Gaussian distribution according to
\begin{align}
    \vh \sim \NC (\bm{0},\mc), \label{eq:gaussModel}
\end{align}
and if further this statistic is known at the receiver,
the genie-aided \ac{CME} can be formulated as~\cite[Ch. 10]{Kay1993}
\begin{align}
    \hat{\vh}_\text{CME} 
    = \mc \left(\mc + \mc_\vn\right)^{-1} \vy_p. \label{eq:cme_lmmse}
\end{align}
This \ac{LMMSE} estimator
achieves the \ac{MSE} of
\begin{align}
    \mse^\mathrm{plain} = \tr \left[\mc - \mc(\mc + \mc_\vn)^{-1}\mc\right]. \label{eq:mse}
\end{align}
For $\mc_\vn = \sigma^2\I_M$, the \ac{MSE} can be expressed using the Woodbury identity as
\begin{align}
    \mse^\mathrm{plain} = \sum_{i=1}^M \frac{{\rho}_i \sigma^2}{{\rho}_i + \sigma^2}, \label{eq:mse2}
\end{align}
where $\rho_i$ are the eigenvalues of $\mc$.
\update{In contrast to \ac{MMSE}-based channel estimation, \ac{ML}-based channel estimation does not utilize knowledge about the channels' prior distribution, making it a suboptimal estimation problem.
}

For our considerations concerning semi-blind channel estimation, we assume knowledge about the subspace defined by $\range(\mh)=\range(\mv)$, where we denote with $\mv$ the left singular vectors of $\mh$, which span the same subspace as the columns of $\mh$.
Generally, we can formulate the \update{(suboptimal)} \ac{ML} estimate of $\mh$ in view of~\eqref{Eq:AllObservations} as~\cite{Nayebi2018}
\begin{align}
    &\min_{\mh,\mx}\sum_{n=1}^J \left\|\vy(n) - \vh_n\right\|^2 
    + \sum_{n=J+1}^N \left\|\vy(n) - \mh\vx(n)\right\|^2,\label{eq:ml_opt}
\end{align}
where the first term belongs to the pilot observations, and the second part refers to the observation obtained from the payload symbols.
\update{Obtaining a closed-form solution for~\eqref{eq:ml_opt} is known to be hard~\cite{Dempster1977}.}
In~\cite{Nayebi2018}, the \ac{EM} algorithm is introduced to solve this channel estimation problem in terms of maximum likelihood, whereas, in~\cite{Medles2003}, a semi-blind method is derived based on utilizing the subspace $\range(\mh)=\range(\mv)$.
Now, given the subspace $\range(\mv)$, we can reformulate \eqref{eq:ml_opt} as~\cite{Medles2003}
\begin{align}
    \min_{\ms,\mx} \sum_{n=1}^J \left\|\vy(n) - \mv\vs_n\right\|^2 
    + \hspace{-1pt}\sum_{n=J+1}^N \hspace{-1pt}\left\|\vy(n) - \mv\ms\vx(n)\right\|^2,\label{eq:ml_opt2}
\end{align}
with $\mh=\mv\ms$ and $\ms=[\vs_1,\dots,\vs_J]\in \mathbb{C}^{J\times J}$.
\update{We can rewrite the second term in~\eqref{eq:ml_opt2} as
\begin{align}
    \sum_{n=J+1}^N \hspace{-5pt}\left\|\vy(n) - \mv\ms\vx(n)\right\|^2
    &= \hspace{-5pt}\sum_{n=J+1}^N \hspace{-5pt}\left\|\left(\I_M - \mv\mv^\He\right)\vy(n)\right\|^2 \nonumber\\
    &\quad+ \left\|\mv^\He\vy(n) - \ms\vx(n)\right\|^2,\label{eq:orth}
\end{align}
where the equality comes from the fact that the two terms are orthogonal to each other.
The first term in~\eqref{eq:orth} is constant for given $\mv$.
In the case of a continuous constellation for $\vx(n)$, the second term vanishes by solving for $\mx$.
Further, in the case of discrete symmetric signal constellations, the second term in~\eqref{eq:orth} can not be jointly solved for $\ms$ and $\mx$ due to phase ambiguity.
}%
Thus, the \ac{ML} problem for the user of interest \update{reduces to the pilot observations as}
\cite{Medles2003}
\begin{align}
    \min_\vs\|\vy_p - \mv\vs\|^2, \label{eq:ml_est}
\end{align}
with the closed form solution $\hat{\vh}_\text{ML} = \mv\mv^\He\vy_p$.
This estimator's \ac{MSE} is
\begin{align}
    \mse^\mathrm{ML} =  \E\left[\|\vh - \mv\mv^\He\vy_p\|^2\right] = J\sigma^2. \label{eq:mse_ml}
\end{align}

In the following, we introduce two channel estimation strategies combining the information provided by $\mc$ and $\range(\mv)$.

\subsection{Subspace Channel Estimator}
\label{sec:sub}

Using the information in $\range({\mv})$, we can solve the estimation within the subspace as previously proposed in~\cite{Deng2020}.
For this, the pilot system model in \eqref{eq:pilotsys_su} is mapped into the $J$-dimensional subspace as
\begin{align}
	\vy^\prime = \mv^\He\vy_{p} &= \mv^\He\vh + \mv^\He \vn = \vh^\prime + \vn^\prime. \label{eq:sub}
\end{align}
Under the assumption that $\mv$ is  chosen independently of $\vh$, the distribution $\vh^\prime\sim\NC(\bm{0},\mv^\He\mc\mv)$ can be used to formulate the estimate~\cite{Deng2020}
\begin{align}
	\hat{\vh}^\prime = \;&\mv^\He\mc\mv \left(\mv^\He\mc\mv + \sigma^2\I_J\right)^{-1}\mv^\He\vy_p. \label{eq:condsub}
\end{align}
By design, $\mv$ depends on $\vh$ and, hence,~\eqref{eq:condsub} is a suboptimal but feasible estimate for $\vh^\prime$.
After solving the estimation in the subspace for $\vh^\prime$, the solution can be transformed back using~\cite{Deng2020}
\begin{align}
	\hat{\vh}_\text{sub} = \mv \hat{\vh}^\prime. \label{eq:sub2}
\end{align}

\subsection{Projected Channel Estimator}
\label{sec:proj}

As an alternative approach, we propose using the orthogonal subspace projection $\matp_\mh=\mv\mv^\He$ as a preprocessing filter.
Since the projector $\matp_\mh$ does not affect $\vh$, the resulting projected pilot observation is given by
\begin{align}
	\tilde{\vy} = \matp_\mh\vy_p = \vh + \matp_\mh \vn = \vh + \tilde{\vn}.
\end{align}
To formulate the \ac{CME} $\hat{\vh} = \mathbb{E}\left[\vh\mid\tilde{\vy}\right]$, we calculate the \update{covariance matrix} of 
\update{$\tilde{\vy}$ as
\begin{align}
    \mathbb{E}\left[\tilde{\vy}\tilde{\vy}^\He\right] 
    &= \mc + \mathbb{E}\left[\sigma^2\matp_\mh\right], \label{eq:cov_y}
\end{align}
where the mixing terms vanish due to the independence of $\vh$ and $\vn$.
}
To get an intuitive understanding of \update{the noise covariance in}~\eqref{eq:cov_y}, let us consider a scenario involving spatially uncorrelated channels, meaning that path gains and channel directions are uncorrelated. 
This is the case when users are uniformly distributed over the directions, e.g., the spatial channel model in Section~\ref{sec:3gpp}, resulting in a scenario's channel covariance matrix that is a scaled identity~\cite[Def. 2.3]{Bjoernson2017}. 
In such a case, the matrices with the sample covariance matrix's eigenvectors are distributed with Haar measure~\cite[Chap. 1]{Milman1986}, i.e., uniformly distributed on the manifold of unitary matrices.
Assuming spatially uncorrelated channels 
we have
\begin{align}
	\mc_{\tilde{\vn}} = \sigma^2\frac{J}{M}\I_M, \label{eq:noise_tilde}
\end{align}
which we assume to hold for the remainder of this section.
We can then formulate the projected \ac{LMMSE} estimator as
\begin{align}
	\hat{\vh}_\text{proj} = \mc \left(\mc + \mc_{\tilde{\vn}}\right)^{-1}\tilde{\vy}. \label{eq:proj}
\end{align}

\subsection{Performance Analysis \update{for Perfect Subspace Knowledge}}

If~\eqref{eq:noise_tilde} is true, the \ac{MSE} of the proposed projected channel estimator can directly be written as (cf.~\cref{app:projMSE})
\begin{align}
    \mse^\proj &= \tr \left({\mc} - {\mc}\left({\mc} + \sigma^2\frac{J}{M}\I_M\right)^{-1}{\mc}\right) \label{eq:mse_proj}\\
    &= \sum_{i=1}^M \frac{\rho_i\sigma^2}{\frac{M}{J}\rho_i + \sigma^2}. \label{eq:mse_proj2}
\end{align}
Comparing the performance to the plain \ac{LMMSE} from~\eqref{eq:mse2} we see that
\begin{align}\label{eq:mse_proj3}
    \frac{\rho_i\sigma^2}{\frac{M}{J}\rho_i + \sigma^2}\leq \frac{\rho_i\sigma^2}{\rho_i + \sigma^2},
\end{align}
holds for every $i\in\{1,\dots,M\}$, resulting in $\mse^\proj \leq \mse^\mathrm{plain}$.
The inequality in \eqref{eq:mse_proj3} only holds with equality if $J=M$.
Additionally, we can compare the \ac{MSE} of the projected \ac{LMMSE} to the \ac{ML} estimator from~\eqref{eq:mse_ml} by reformulating~\eqref{eq:mse_proj2} as
\begin{align}
     \mse^\proj = 
    \frac{J}{M}\sigma^2\sum_{i=1}^M \frac{\rho_i}{\rho_i + \frac{J}{M}\sigma^2} \leq J\sigma^2 = \mse^\text{ML}.
\end{align}

In the case of uncorrelated Rayleigh fading, the channel covariance matrix is given as $\mc=\I_M$.
The projected variant's \ac{MSE} results in
\begin{align}
    \mse^\proj_\mathrm{iid} =   \frac{JM\sigma^2}{{M} + J\sigma^2}.
\end{align}
Now let us compare the projected variant to the subspace \ac{LMMSE}.
For the case of $\mc=\I_M$, the subspace \ac{LMMSE} estimator boils down to
\begin{align}
    \hat{\vh}_\sub = \frac{1}{1+\sigma^2}\mv \mv^\He\vy,
\end{align}
with its corresponding \ac{MSE} as (cf.~\cref{app:subMSE})
\begin{align}
    \mse^\sub_\mathrm{iid} = \frac{\sigma^2(M\sigma^2 + J)}{(1+\sigma^2)^2} \geq \mse^\proj_\mathrm{iid}. \label{eq:sub_vs_proj}
\end{align}

\cref{fig:nmse_perfect_rayleigh} shows the individual channel estimators' performances based on perfect subspace knowledge for the case of uncorrelated Rayleigh fading.
As one can see, the projected channel estimator outperforms all other estimators across the whole \ac{SNR} range.
Further, we realize that the subspace \ac{LMMSE} converges to the \ac{ML} method from above.

\update{
Additionally, performance guarantees of the proposed projected \ac{LMMSE} compared to the unbiased \ac{CRB} as derived in~\cite{Nayebi2018} can be given.
We formulate the following theorem for the asymptotic region, where $M$ and $N$ tend to infinity.
\begin{theorem}\label{the:perf_proj_crb}
    We consider the limits of $N\to\infty$ and $M/N\to \alpha$ with $\alpha > 0$, while keeping the number of pilot symbols in the order of users $J$.
    Then, the projected \ac{LMMSE} based on perfect knowledge of the subspace spanned by the left singular vectors of the channel matrix $\mh$ is equal to or better than any unbiased estimator in terms of \ac{MSE} for uncorrelated Rayleigh fading.
\end{theorem}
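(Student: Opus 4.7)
The plan is to compare the explicit MSE of the projected LMMSE for $\mc=\I_M$, namely
\[
\mse^{\proj}_{\text{iid}} \;=\; \frac{JM\sigma^2}{M+J\sigma^2} \;=\; J\sigma^2\cdot\frac{M}{M+J\sigma^2},
\]
against the asymptotic value of the unbiased CRB in the semi-blind setting. Because the projected MSE is structurally $J\sigma^2$ scaled by a shrinkage factor strictly less than one for every finite $M\geq J$, the task reduces to establishing that in the prescribed scaling any unbiased estimator's MSE is bounded below by $J\sigma^2$.

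For the CRB side, I would invoke the semi-blind Fisher information matrix derived in~\cite{Nayebi2018}, whose contributions split into a pilot part proportional to $\sigma^{-2}\I_M$ per user and a payload part coupling $\mh$ with the transmitted data. The key step is to show that the column space of $\mh$ becomes asymptotically identifiable from the payload observations alone: for iid Rayleigh channels, the $J$ signal eigenvalues of $\frac{1}{N-J}\my_d\my_d^\He$ scale like $M$ and therefore separate from the Marchenko--Pastur bulk, so a spiked-model argument guarantees that the top-$J$ eigenvectors consistently recover $\range(\mh)$ in the joint limit $N\to\infty$, $M/N\to\alpha$. Conditioning on a perfectly identified subspace $\mv$, the pilot model reduces to $\mv^\He\vy_p=\vs+\mv^\He\vn$ with $\vh=\mv\vs$, whose FIM for $\vs$ is $\sigma^{-2}\I_J$. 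Applying the CRB transformation rule through the linear map $\vh=\mv\vs$ yields a per-user lower bound of $\tr(\sigma^2\mv\mv^\He)=J\sigma^2$ on the MSE of $\hat\vh$, matching the ML MSE in~\eqref{eq:mse_ml}; since this bound does not depend on the realization of $\vh$, it survives averaging with respect to the Rayleigh prior.

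Combining the two ingredients, for every finite $M\geq J$ one has $\mse^{\proj}_{\text{iid}}<J\sigma^2\leq \mathrm{MSE}^{\text{unbiased}}$, and both values coincide as $M/J\to\infty$, which is exactly the ``equal to or better than'' statement of the theorem. The main obstacle lies in making the CRB step rigorous: one must verify that the residual semi-blind FIM contributions associated with jointly identifying the payload symbols and the subspace truly vanish in the joint limit $N\to\infty$, $M/N\to\alpha$, rather than leaving a strictly positive gap reminiscent of the non-vanishing interference components reported in~\cite{Ma2014}. Once this asymptotic collapse of the unbiased semi-blind CRB to the in-subspace value $J\sigma^2$ is established, the remaining comparison with $\mse^{\proj}_{\text{iid}}$ is elementary algebra.
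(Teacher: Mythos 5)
Your first ingredient is fine and matches the paper: for $\mc=\I_M$ the projected MSE is $J\sigma^2\cdot\frac{M}{M+J\sigma^2}$, which tends to $J\sigma^2$ in the prescribed limit. The gap is entirely in the second ingredient, and you have correctly identified it yourself as ``the main obstacle'': you never actually establish that the unbiased semi-blind CRB is bounded below by $J\sigma^2$. The paper does not attempt your genie-aided construction at all; it simply cites \cite[Theorem 4]{Zhang2025}, which gives the exact asymptotic value of the deterministic semi-blind CRB as $(1+\alpha)J\sigma^2$, together with the statement that in this regime the deterministic CRB lower-bounds the stochastic one, and then concludes via $J\sigma^2\leq(1+\alpha)J\sigma^2$. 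Note that this cited result also shows your conjectured ``collapse of the unbiased semi-blind CRB to the in-subspace value $J\sigma^2$'' is false: the residual contribution from jointly handling the payload symbols does \emph{not} vanish, precisely because $M/N\to\alpha>0$ means the number of nuisance symbol parameters grows proportionally with the data, leaving the strictly positive gap $\alpha J\sigma^2$ reminiscent of \cite{Ma2014}. That gap happens to point in the favorable direction, so the theorem survives, but your argument as written does not prove the required lower bound.

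Two further technical objections to your CRB step. First, the spiked-covariance/Marchenko--Pastur consistency argument for the top-$J$ eigenvectors is the wrong tool here: consistency of a particular subspace estimator is an achievability statement and contributes nothing to a lower bound on the MSE of all unbiased estimators; for a genie-aided bound you would simply hand the true $\mv$ to the estimator, which can only decrease the CRB, no identifiability argument needed. Second, even granting the genie the subspace, your claim that the resulting FIM for $\vs$ is $\sigma^{-2}\I_J$ ignores that the payload observations $\vy(n)=\mv\ms\vx(n)+\vn(n)$ also depend on $\ms$ and hence contribute Fisher information; to recover $J\sigma^2$ you would additionally have to show that the Schur complement with respect to the unknown symbols annihilates this contribution (which, for the conditional/deterministic CRB, follows from the invertible-mixing ambiguity $\ms\vx(n)=(\ms\mt)(\mt^{-1}\vx(n))$, but must be argued, and fails for the stochastic model where the payload informs $\ms\ms^\He$). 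If you want to avoid these issues, the clean route is the paper's: invoke the known asymptotic expression $(1+\alpha)J\sigma^2$ for the deterministic CRB and its relation to the stochastic CRB.
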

\begin{proof}
We take the limit of the \ac{MSE} of the projected \ac{LMMSE} estimator as
\begin{align}
     \lim_{\substack{N\to\infty\\\frac{M}{N}\to\alpha}}\mse^\proj &= 
    \lim_{\substack{N\to\infty\\\frac{M}{N}\to\alpha}}\frac{J}{M}\sigma^2\sum_{i=1}^M \frac{\rho_i}{\rho_i + \frac{J}{M}\sigma^2} \\
    &= J\sigma^2 \leq (1+\alpha)J\sigma^2 = \mathrm{CRB}_\mathrm{iid,d}, \label{eq:inequ_crb}
\end{align}
where $\mathrm{CRB}_\mathrm{iid,d}$ denotes the deterministic \ac{CRB}'s limit for uncorrelated Rayleigh fading.
The last equality in~\eqref{eq:inequ_crb} was proven in~\cite[Theorem 4]{Zhang2025}.
Furthermore, it is stated in~\cite{Zhang2025} that in this asymptotic limit, the deterministic \ac{CRB} is always lower than the stochastic \ac{CRB}.
\end{proof}
\cref{the:perf_proj_crb} provides some insights on the asymptotic superiority of our proposed projected \ac{LMMSE} strategy based on perfect subspace knowledge compared to any unbiased channel estimator.
Additionally, the \ac{MSE} of the subspace variant as well as the plain \ac{LMMSE} approach infinity for the considered limit, making them irrelevant for this asymptotic regime.
Interestingly, the projected \ac{LMMSE} also approaches $J\sigma^2$ for the case where only $M$ tends to infinity, making the projected \ac{LMMSE} a valid candidate for large antenna systems as the performance no longer depends on $M$.
}%

\subsection{Maximum Likelihood Subspace Estimation}
\label{sec:mle}

After introducing the methods utilizing the additional subspace information provided by $\range(\mv)$ to enhance the \ac{CSI} estimation quality, let us consider the estimation of such a subspace.
As the received data symbols are transmitted over the same channels, we can use these payload symbols to estimate the subspace containing all user channels.

To this end, let us reconsider the \ac{ML} estimate of $\mh$ in~\eqref{eq:ml_opt}.
Instead of directly optimizing on this \ac{ML} formulation as done in~\cite{Nayebi2018},
which generally does not result in the \ac{MMSE}, we only take this log-likelihood formulation as an intermediate step to estimate the subspace $\range(\mv)$.
\update{First, let us consider the payload \ac{ML} objective formulation in~\eqref{eq:orth}.}
We can then reformulate the problem,
\update{neglecting the second term due to phase ambiguity,}
resulting in~\cite{Medles2003} 
\begin{align}
	\max_\mh \;\tr\left(\matp_\mh\hat{\mc}^{(d)}_{\vy\mid\mh}\right), \label{eq:dmle2}
\end{align}
where $\matp_\mh= \mh (\mh^\He\mh)^{-1}\mh^\He = \mv\mv^\He$ and $\hat{\mc}^{(d)}_{\vy\mid\mh} = \frac{1}{N-J}\my_d\my_d^\He$, with $\my_d$ from~\eqref{Eq:AllObservations}.
The maximization in \eqref{eq:dmle2} is solved 
by setting $\matp_\mh$ equal to $\hat{\mv}_d \hat{\mv}_d^\He$
with $\hat{\mv}_d$ holding the $J$ dominant eigenvectors of the receive sample covariance matrix $\hat{\mc}^{(d)}_{\vy\mid\mh}$.
Additionally, it is trivial to see that the first term in~\eqref{eq:ml_opt} is minimized by $\vh_n = \vy(n)$. 
The subspace spanned by the solution $\vh_n = \vy(n)$ is the same as the subspace spanned by the $J$ eigenvectors of the sample covariance matrix $\hat{\mc}^{(p)}_{\vy\mid\mh} = \frac{1}{J}\my_p\my_p^\He$, which ignores the additional phase information contained in the pilot observation.
Thus, 
the overall subspace estimate $\hat{\mv}= [\vv_1,\dots,\vv_J]$ is found by
taking the $J$ dominant eigenvectors $\vv_j$ of the sample covariance matrix defined as
\begin{align}
    \hat{\mc}_{\vy\mid\mh} = \frac{1}{N}\my\my^\He. \label{eq:smplcov_y}
\end{align}
This result has also been used in~\cite{Deng2020}.

To save computational complexity while calculating the sample covariance matrix in~\eqref{eq:smplcov_y}, information from the previous coherence intervals can be utilized.
This can be done by adaptively updating the subspace using efficient tracking algorithms as proposed in, e.g.,~\cite{Utschick2002, Yang1995}.

\begin{figure*}[t]
\centering
\includegraphics[]{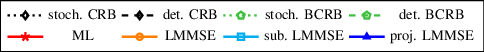}
    \centering
  \subfloat[Perfect Subspace Knowledge]{
	\centering
 \includegraphics[]{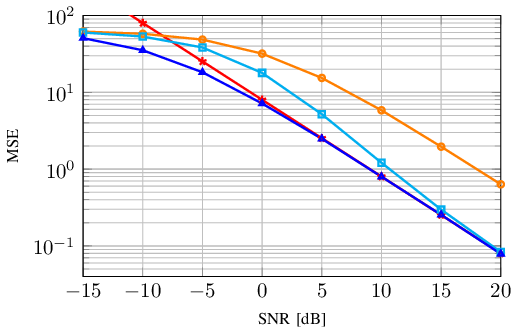}
 \label{fig:nmse_perfect_rayleigh}
     }%
  \subfloat[\update{Subspace Estimation based on $N=200$ Snapshots}]{
	\centering
    \includegraphics[]{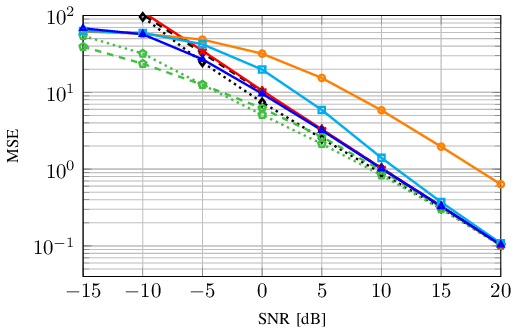}
    \label{fig:nmse_est_rayleigh}
   	}%
	\caption{
	\update{MSE over the SNR for given channel estimations based on perfect statistical knowledge in a $J=8$ user and $M=64$ antennas scenario with uncorrelated Rayleigh fading and Gaussian symbols using (a) perfect subspace knowledge and (b) $N=200$ snapshots for subspace estimation.} 
	}
	\label{fig:nmse_rayleigh}
\end{figure*}

\update{
\subsection{Performance Analysis for Estimated Subspace}
}%

\update{\cref{fig:nmse_est_rayleigh} shows the individual channel estimators' performances based on an estimated subspace using $N=200$ snapshots for uncorrelated Rayleigh fading.
Additionally, the unbiased \acp{CRB} derived in~\cite{Nayebi2018} and their extensions to the \acp{BCRB}, derived in \cref{app:bcrb}, are shown.
For increasing \ac{SNR}, all semi-blind channel estimators considered in this section converge to the \acp{CRB} and \acp{BCRB}.
Furthermore, we see that for low \ac{SNR} values the introduced \ac{MMSE}-based semi-blind channel estimators outperform the different unbiased \acp{CRB}, which are lower bounds for, e.g., the \ac{EM}-based estimator from~\cite{Nayebi2018}.
}

\update{In order to quantify the error introduced by the subspace estimation in comparison to perfect subspace knowledge, let us consider the projected \ac{LMMSE} as introduced in~\cref{sec:proj} in combination with the subspace estimation from~\cref{sec:mle}.
We assume Gaussian distributed symbols for the following derivations.
The introduced error by estimating the subspace can be bounded using~\cref{lem:mse_bound}.
\begin{theorem}\label{lem:mse_bound}
Let the subspace spanned by the left singular vectors of the channel matrix $\mh$ be estimated based on $N$ observations.
The probability that the difference between the projected LMMSE channel estimators utilizing the estimated and true subspace, respectively, is less than $\varepsilon$ is given as
\begin{multline}
    P\left(\E_{\mx,\mn}[\|\hat{\vh}_\proj(\hat{\mv}) - \hat{\vh}_\proj({\mv})\|^2] \leq \varepsilon\right) \\
    \geq 1 - \frac{4 k^2(M-J)\lambda_{\max}(\mw_\proj^\He\mw_\proj)}{J^2N\varepsilon} \sum_{j \leq J}\frac{|s_{\max}|^2+2\sigma^2}{\lambda_j^2}, \label{eq:mse_bound}
\end{multline}
where $\lambda_j$ is the $j$-th eigenvalue of 
$\mh\mh^\He$, 
$s_{\max} = \max_{j\in\{1,\dots,J\}} s_j$ with $\vh=\mv\vs$,
and 
\begin{align}
    k^2 = \sigma^2\left(\sigma^2+\tr(\mc_{\vy\mid\mh})\right).
\end{align}
\end{theorem}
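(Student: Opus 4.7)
The plan is to establish the bound in three stages: reduce the probability statement to an expectation bound, translate the estimator difference into a subspace perturbation, and finally invoke concentration of the sample covariance matrix.

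First, I would apply Markov's inequality to the non-negative random variable
$X := \E_{\mx,\mn}[\|\hat{\vh}_\proj(\hat{\mv}) - \hat{\vh}_\proj({\mv})\|^2]$,
whose remaining randomness comes from the snapshots used in forming $\hat{\mv}$. This converts the target claim into the upper bound
\begin{equation*}
\E[X] \;\leq\; \frac{4 k^2(M-J)\lambda_{\max}(\mw_\proj^\He\mw_\proj)}{J^2 N}\sum_{j\leq J}\frac{|s_{\max}|^2+2\sigma^2}{\lambda_j^2},
\end{equation*}
and the rest of the argument is devoted to proving this deterministic-looking inequality.

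Second, I would write the estimator difference explicitly. Since both variants share the same LMMSE filter $\mw_\proj=\mc(\mc+\mc_{\tilde{\vn}})^{-1}$, only the projector changes, giving $\hat{\vh}_\proj(\hat{\mv})-\hat{\vh}_\proj(\mv)=\mw_\proj(\hat{\matp}_\mh-\matp_\mh)\vy_p$. A standard operator-norm inequality then isolates the filter constant $\lambda_{\max}(\mw_\proj^\He\mw_\proj)$ and reduces the task to bounding $\E_{\mx,\mn}[\|(\hat{\matp}_\mh-\matp_\mh)\vy_p\|^2]$. Splitting $\vy_p=\mv\vs+\vn$ gives a signal part $(\hat{\matp}_\mh-\matp_\mh)\mv\vs=-(\I-\hat{\matp}_\mh)\mv\vs$ whose squared norm is controlled by $\|\vs\|_\infty^2$ times $\|\sin\Theta(\hat{\mv},\mv)\|_F^2$, and a noise part whose expectation is $\sigma^2\|\hat{\matp}_\mh-\matp_\mh\|_F^2=2\sigma^2\|\sin\Theta\|_F^2$. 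The combined prefactor $|s_{\max}|^2+2\sigma^2$ in the stated bound already appears at this stage.

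Third, I would apply a Davis--Kahan--type $\sin\Theta$ theorem to the sample covariance $\hat{\mc}_{\vy\mid\mh}=\tfrac1N \my\my^\He$, whose population version has the dominant $J$-dimensional eigenspace equal to $\range(\mv)$ with spectral gap $\lambda_J/J$ (here $\lambda_j$ denotes an eigenvalue of $\mh\mh^\He$; recall $\E[\mx\mx^\He/N]=\I_J/J$). This yields a bound of the form $\|\sin\Theta\|_F^2\lesssim \sum_{j\leq J} \|\vv_j^\He(\hat{\mc}-\mc)\mv_\perp\|^2/\lambda_j^2\cdot J^2$, so the $\lambda_j^{-2}$ weights together with the $J^{-2}$ normalization in the theorem emerge naturally from the gap.

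Finally, I would invoke concentration of the Gaussian sample covariance: the expected squared Frobenius norm of $(\hat{\mc}-\mc)$ restricted to the relevant $J\times(M-J)$ block behaves like $\tfrac{1}{N}(M-J)\sigma^2(\sigma^2+\tr \mc_{\vy\mid\mh})=\tfrac{(M-J)k^2}{N}$, which supplies precisely the factors $k^2$, $(M-J)$, and $1/N$. The main obstacle is this last step: one must carefully evaluate the fourth-order Gaussian moments of the cross-block of $\hat{\mc}-\mc$ (rather than the full Frobenius norm) so that only the $M-J$ orthogonal directions contribute and the variance factors collapse to exactly $\sigma^2+\tr\mc_{\vy\mid\mh}$; handling this cleanly without picking up extraneous multiplicative constants is the delicate part of the argument.
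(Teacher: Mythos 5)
Your first two stages coincide with the paper's argument: the paper likewise pulls out $\lambda_{\max}(\mw_\proj^\He\mw_\proj)$ from $\hat{\vh}_\proj(\hat{\mv})-\hat{\vh}_\proj(\mv)=\mw_\proj(\hat{\mv}\hat{\mv}^\He-\mv\mv^\He)\vy$, splits $\vy$ into signal and noise, and reduces both contributions to the common overlap quantity $\sum_{j\le J}\sum_{i>J}|\hat{\vv}_j^\He\vv_i|^2$ with exactly the prefactor $|s_{\max}|^2+2\sigma^2$ (the signal term via $(\hat{\mv}\hat{\mv}^\He-\mv\mv^\He)\vh=-(\I-\hat{\mv}\hat{\mv}^\He)\vh$ and the bound by $|s_{\max}|^2$, the noise term via a trace computation yielding $2\sigma^2\sum_{j\le J}\sum_{i>J}|\hat{\vv}_j^\He\vv_i|^2$). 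Up to the point where everything is controlled by the subspace overlap, you are on the paper's track.

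The divergence --- and the gap --- lies in how that overlap is bounded. The paper does not argue via Markov plus Davis--Kahan; it invokes a non-asymptotic eigenvector-overlap perturbation result (\cite[Cor.~4.1 and 4.3]{Loukas2017}) that directly bounds $P\bigl(\sum_{j\le J}\sum_{i>J}w_{ji}|\hat{\vv}_j^\He\vv_i|^2\le\varepsilon\bigr)$ with \emph{per-pair} weights $(\lambda'_j-\lambda'_i)^{-2}$ and constants $k_i^2=\E[\|\vy\vy^\He\vv_i\|^2]-(\lambda'_i)^2=\lambda'_i\left(\lambda'_i+\tr(\mc_{\vy\mid\mh})\right)$; specializing $\lambda'_i=\sigma^2$ for $i>J$ produces $k^2$ and the factor $M-J$. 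A standard Davis--Kahan $\sin\Theta$ theorem, which is what you name, controls $\|\sin\Theta\|_F$ only through the \emph{worst-case} spectral gap $\lambda'_J-\sigma^2$ and therefore cannot produce the per-eigenvalue $\lambda_j^{-2}$ weighting appearing in the stated bound. The refinement you write down, $\|\sin\Theta\|_F^2\lesssim\sum_{j\le J}\|\vv_j^\He(\hat{\mc}-\mc)\mv_\perp\|^2/(\lambda'_j-\sigma^2)^2$, is first-order eigenvector perturbation theory, which is an asymptotic approximation rather than a rigorous finite-$N$ inequality; turning it into one (with the explicit factor $4$) and evaluating the fourth-order moments so that exactly $k^2=\sigma^2(\sigma^2+\tr(\mc_{\vy\mid\mh}))$ appears is precisely the content of the corollary the paper imports. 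So your sketch has the right architecture, but the tool you cite cannot deliver the claimed constants; to close the argument you would have to either prove the per-component overlap variance bound yourself or invoke it as the paper does.
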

\begin{proof}
    See Appendix~\ref{app:mse_bound}
\end{proof}}

\update{The probability bound in \cref{lem:mse_bound} is very loose, as the used inequalities are not tight.
Nevertheless, this bound can be used to show the superiority of the projected \ac{LMMSE} based on an estimated subspace for a high number of observations compared to the other considered estimators.
\cref{tab:sub_est} gives the bound values for the case of $N=10^4$ observations at \ac{SNR} $=0$ dB for different number of users. 
As one can see, the projected \ac{LMMSE} outperforms the other estimators with high probability.}

\begin{table}
\footnotesize
		\caption{\update{Lower bound on the probability of the projected \ac{LMMSE} outperforming the other estimators for $N=10^4$ observations at \ac{SNR} $=0$dB.}
      }
      \vspace{5pt}
	\centering
    \update{
     \begin{tabular}{c|c|c|c } 
		Number of users $J$& $8$ & $16$ & $32$\\ \hline\hline
		  \rule{0pt}{10pt}$P(\mse^\proj_\mathrm{iid} \leq \mse^\mathrm{sub}_\mathrm{iid})$ & $\geq0.9873$ & $\geq0.9791$ & $\geq0.9642$\\[2pt]  \hline
          \rule{0pt}{10pt}$P(\mse^\proj_\mathrm{iid} \leq \mse^\mathrm{ML}_\mathrm{iid})$ & $\geq0.8453$ & $\geq0.9539$ & $\geq0.9907$\\[2pt]   \hline
          \rule{0pt}{10pt}$P(\mse^\proj_\mathrm{iid} \leq \mse^\mathrm{plain}_\mathrm{iid})$ & $\geq0.9946$ & $\geq0.9923$ & $\geq0.9907$\\[2pt]   
	\end{tabular}}

	\label{tab:sub_est}
\end{table}

\section{Proposed Semi-Blind Channel Estimation - Utilizing Generative Prior}

In practice, the user channels do not follow a Gaussian distribution.
Hence,~\eqref{eq:cme_lmmse} can not be utilized directly and
the \ac{CME} given the pilot observation $\vy_p$ is formulated as
\begin{align}
    \hat{\vh}_\text{CME} = \mathbb{E}\left[\vh\mid\vy_p\right] 
    = \int\vh \frac{p_{\vn}(\vy_p-\vh)p(\vh)}{p(\vy_p)}\mathrm{d}\vh.\label{eq:cme}
\end{align}
As seen in~\eqref{eq:cme}, the \ac{CME} generally can not be computed analytically. 
First, the \ac{CME} needs access to $p(\vh)$, which is generally unavailable in practice. 
Additionally, no closed-form solution exists to the integral in~\eqref{eq:cme}.

In order to reformulate the \ac{CME}, we first use the property that for any arbitrarily distributed random variable $\vh$, we can always find a condition $\vc$ which makes the conditional distribution Gaussian.
Secondly, it has been shown in~\cite{Boeck2024} that for wireless communication channels, this conditional Gaussian distribution preserves the zero-mean property as
\begin{align}
    \vh\mid\vdel \sim \NC (\bm{0},\mc_{\vh\mid\vc}). \label{eq:condModel}
\end{align}
Thus, we can reformulate the \ac{CME} as
\begin{align}
    \mathbb{E}\left[\vh\mid\vy_p\right] &= \mathbb{E}\left[\mathbb{E}\left[\vh\mid\vy_p,\vc\right] \mid\vy_p\right] \\
    &= \int \mathbb{E}\left[\vh\mid\vy_p,\vc\right] p(\vc\mid\vy_p)\mathrm{d}\vc \\
    &= \int \hat{\vh}_\vc(\vy_p) p(\vc\mid\vy_p)\mathrm{d}\vc
\end{align}
where
\begin{align}
    \hat{\vh}_\vc(\vy_p) = \mc_{\vh\mid\vc}\left(\mc_{\vh\mid\vc} + \mc_\vn\right)^{-1}\vy_p, \label{eq:condCME}
\end{align}
denotes the \ac{LMMSE} estimate given $\vc$.
However, finding a suitable condition $\vc$ can be challenging, particularly as the true distribution of $\vh$ is unknown.
To this end, \acp{CGLM} were proposed in~\cite{Koller2022,Fesl2023c,Baur2024}, which approximate the \ac{CME} based on a \ac{GMM}, \ac{MFA}, and \ac{VAE}, respectively.
All three methods learn a model that provides the conditional Gaussian distribution $\vh\mid\vc$ based on a discrete (\ac{GMM}, \ac{MFA}) or continuous (\ac{VAE}) latent variable $\vc$.
This work focuses on the \ac{GMM} and \ac{VAE}, which we adapt to semi-blind channel estimation in the following.

\subsection{GMM-based Semi-blind Channel Estimation}
\label{subsec:GMM}

\begin{algorithm}[t]
\caption{Subspace GMM Channel Estimator}\label{alg:subGMM}
\textbf{Offline Training Phase}
\begin{algorithmic}[1]
\Require Training dataset $\mathcal{H}=\{\vh_t\}_{t=1}^T$
\State Fit the GMM with the EM algorithm, cf.~\cite{Koller2022}
\par\vskip.5\baselineskip\hrule height .4pt\par\vskip.5\baselineskip
\renewcommand{\algorithmicensure}{\textbf{Online Channel Estimation}}
\Ensure
\Require $\my=[\vy(1),\dots,\vy(N)]$, $\matp$, $\sigma^2$
\State $\hat{\mc}_{\vy\mid\mh} \leftarrow \frac{1}{N}\my\my^\He$
\State $\hat{\mv} \leftarrow J$ dominant eigenvectors of $\hat{\mc}_{\vy\mid\mh}$
\State $\my_p=[\vy_{p,1},\dots,\vy_{p,J}] \leftarrow \my^\prime_p\matp^\dagger$
\For {$j = 1,\dots,J$}
\State $\vy^\prime \leftarrow \mv^\He\vy_{p,j}$
\For{$k=1,\dots,K$}
\State $\hat{\vh}^\prime_{k} \leftarrow \mv^\He\mc_k\mv \left(\mv^\He\mc_k\mv + \sigma^2\I_J\right)^{-1}\vy^\prime$ 
\EndFor
\State $\hat{\vh}_j \leftarrow \mv\sum_{k=1}^K p(k\mid\vy^\prime)\hat{\vh}^\prime_{k}$
\EndFor
\State\Return $\hat{\vh}_j, \forall j=1, \dots,J$
\end{algorithmic}
\end{algorithm}

\begin{algorithm}[t]
\caption{Projected GMM Channel Estimator}\label{alg:projGMM}
\textbf{Offline Training Phase}
\begin{algorithmic}[1]
\Require Training dataset $\mathcal{H}=\{\vh_t\}_{t=1}^T$
\State Fit the GMM with the EM algorithm, cf.~\cite{Koller2022}
\par\vskip.5\baselineskip\hrule height .4pt\par\vskip.5\baselineskip
\renewcommand{\algorithmicensure}{\textbf{Online Channel Estimation}}
\Ensure
\Require $\my=[\vy(1),\dots,\vy(N)]$, $\matp$, $\sigma^2$ 
\State $\hat{\mc}_{\vy\mid\mh} \leftarrow \frac{1}{N}\my\my^\He$
\State $\hat{\mv} \leftarrow J$ dominant eigenvectors of $\hat{\mc}_{\vy\mid\mh}$
\State $\my_p=[\vy_{p,1},\dots,\vy_{p,J}] \leftarrow \my^\prime_p\matp^\dagger$
\For {$j = 1,\dots,J$}
\State $\tilde{\vy} \leftarrow \mv\mv^\He\vy_{p,j}$
\For{$k=1,\dots,K$}
\State $\hat{\vh}_{k} \leftarrow \mc_k \left(\mc_k + \mc_{\tilde{\vn}}\right)^{-1}\tilde{\vy}$ 
\EndFor
\State $\hat{\vh}_j \leftarrow \sum_{k=1}^K p(k\mid\tilde{\vy})\hat{\vh}_{k}$
\EndFor
\State\Return $\hat{\vh}_j, \forall j=1, \dots,J$
\end{algorithmic}
\end{algorithm}

Based on the universal approximation property of \acp{GMM}~\cite{Nguyen2020},
the \ac{PDF} of $\vh$ is approximated by
\begin{align}
	f_\vh^{(K)}(\vh) = \sum_{k=1}^{K} p(k) \NC(\vh;\vmu_k,\mc_k), \label{eq:GMM_h}
\end{align}
where $p(k)$, $\vmu_k$, and $\mc_k$ are the mixing coefficients, means, and covariance matrices of the $k$-th \ac{GMM} component, respectively.
As we consider wireless channels, each component's mean is set to $\vmu_k=\bm{0}$, cf.~\cite{Boeck2024}.
The fitting of the components in~\eqref{eq:GMM_h} is accomplished with the well-known \ac{EM} algorithm~\cite{Bishop2006} based on a set $\mathcal{H} = \{\vh_t\}^T_{t=1}$ of $T$ channel
samples as training data.
Based on the formulation in~\eqref{eq:GMM_h} the conditional \ac{PDF} 
is 
\begin{align}
    \vh \mid k \sim \NC(\vh;\bm{0},\mc_k).
\end{align}
Thus, we have a discrete latent variable in the case of a \ac{GMM}, which helps us parameterize the \ac{CME}.
The resulting semi-blind subspace \ac{GMM} can be formulated as
\begin{align}
	\hat{\vh}_\text{sub. GMM} = \mv \hat{\vh}^\prime_\text{GMM} = \mv\sum_{k=1}^K p(k\mid\vy^\prime)\hat{\vh}^\prime_{\text{GMM},k}, \label{eq:subGMM}
\end{align}
with 
\begin{align}
	\hat{\vh}^\prime_{\text{GMM},k} = \;&\mv^\He\mc_k\mv \left(\mv^\He\mc_k\mv + \sigma^2\I_J\right)^{-1}\mv^\He\vy_p, \label{eq:subGMM_k}
\end{align}
and the corresponding responsibilities
\begin{align}
    p(k\mid\vy^\prime) = \frac{p(k)\NC\left(\vy^\prime;\bm{0},\mv^\He\mc_k\mv + \sigma^2\I_J\right)}{\sum_{i=1}^K p(i)\NC\left(\vy^\prime;\bm{0},\mv^\He\mc_i\mv + \sigma^2\I_J\right)}.
\end{align}
The projected \ac{GMM} is
\begin{align}
	\hat{\vh}_\text{proj. GMM} = \sum_{k=1}^K p(k\mid\tilde{\vy})\hat{\vh}_{\text{proj. GMM},k}, \label{eq:projGMM_all}
\end{align}
with
\begin{align}
	\hat{\vh}_{\text{proj. GMM},k} = \mc_k \left(\mc_k + \mc_{\tilde{\vn}}\right)^{-1}\tilde{\vy} \label{eq:projGMM}
\end{align}
and the associated responsibilities
\begin{align}
    p(k\mid\tilde{\vy}) = \frac{p(k)\NC\left(\tilde{\vy};\bm{0},\mc_k + \mc_{\tilde{\vn}}\right)}{\sum_{i=1}^K p(i)\NC\left(\tilde{\vy};\bm{0},\mc_i + \mc_{\tilde{\vn}}\right)}.
\end{align}
The respective estimators are summarized in~\cref{alg:subGMM} and~\cref{alg:projGMM}.

\subsection{VAE-based Semi-blind Channel Estimation}

\begin{figure}
\centering
 \includegraphics[]{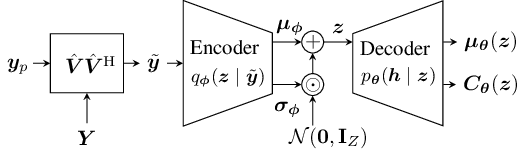}
    \caption{Structure of a semi-blind VAE. The matrix $\hat{\mv}$ contains the $J$ dominant eigenvectors of \eqref{eq:smplcov_y}.
    The encoder and decoder represent DNNs.}
    \label{fig:vae}
\end{figure}

To learn the unknown distribution $p(\vh)$ using a \ac{VAE}, we lower bound the parameterized likelihood $p_\vth(\vh)$ using the \ac{ELBO}.
To formulate the \ac{ELBO}, the variational distributions $q_\vphi (\vz\mid\vy^\prime)$ and $q_\vphi (\vz\mid\tilde{\vy})$ are introduced, which approximate $p(\vz\mid\vy^\prime)$ and $p (\vz\mid\tilde{\vy})$, respectively.
In contrast to the \ac{GMM}, the used subspace $\range(\mv)$ is unknown to the \ac{VAE}'s encoder, making $p(\vz\mid\vy^\prime)$ difficult to learn.
Additionally, the encoder input's dimension would depend on the number of users in the system.
Thus, we propose to approximate both posteriors $p(\vz\mid\vy^\prime)$ and $p (\vz\mid\tilde{\vy})$ with $q_\vphi (\vz\mid\tilde{\vy})$.
A version of the \ac{ELBO} for this case, which is accessible, can be written as~\cite{Kingma2019}
\begin{align}
    \mathcal{L}_{\vth,\vphi} = \mathbb{E}_{q_\vphi} [\log p_\vth (\vh\mid\vz)] - \mathrm{D}_\mathrm{KL}(q_\vphi(\vz\mid\Tilde{\vy})\mid\mid p(\vz)), \label{eq:elbo}
\end{align}
where $\mathbb{E}_{q_\vphi}[\cdot] = \mathbb{E}_{q_\vphi(\vz\mid\tilde{\vy})}[\cdot]$ is the expectation over the variational distribution $q_\vphi (\vz\mid\tilde{\vy})$.
The second term in~\eqref{eq:elbo} is the \ac{KL} divergence
\begin{align}
    \mathrm{D}_\mathrm{KL}(q_\vphi(\vz\mid\Tilde{\vy})\mid\mid p(\vz)) = \mathbb{E}_{q_\vphi}\left[\log\left(\frac{q_\vphi(\vz\mid\Tilde{\vy})}{p(\vz)}\right)\right].
\end{align}
In the \ac{VAE} framework, the \ac{ELBO} is optimized using \acp{DNN} and the reparameterization trick~\cite{Kingma2019}.
In order to do so, the involved distributions are defined as
\begin{align}
    p(\vz) &= \mathcal{N}(\bm{0},\I_Z),\nonumber\\
    p_\vth (\vh\mid\vz) &= \NC (\vmu_\vth(\vz),{\mc}_\vth(\vz)), \\
    q_\vphi (\vz\mid\Tilde{\vy}) &= \mathcal{N}(\vmu_\vphi(\Tilde{\vy}),\diag(\vsig^2_\vphi(\Tilde{\vy}))).\nonumber
\end{align}
The resulting semi-blind \ac{VAE} structure is shown in \cref{fig:vae}.
In the case of a \ac{ULA} or \ac{URA} at the \ac{BS}, the channel covariance matrix is either Toeplitz or block-Toeplitz, respectively. 
As shown in~\cite{Boeck2024}, the conditional covariance matrix at the \ac{VAE}'s output preserves this structure.
Thus, we parameterize the output covariance matrix as 
\begin{align}
    {\mc}_\vth(\vz) = \mq^\He \diag(\vc_\vth(\vz)) \mq,
\end{align}
where $\mq = \mq_M$ or $\mq = \mq^\prime_{N_v} \otimes \mq^\prime_{N_h}$, respectively, where $\mq_M$ is a \ac{DFT} matrix of size $M$ resulting in a circulant approximation, cf.~\cite{Baur2024}, and $\mq^\prime_{N_x}$ contains the first $N_x$ columns of the $2N_x\times2N_x$ \ac{DFT} matrix resulting in a block-Toeplitz parameterization, cf.~\cite{Baur2024a}.
Further, for the (block-)Toeplitz parameterization, we can set $\vmu_\vth(\vz)=\bm{0}$, cf.~\cite{Boeck2024}.

\begin{algorithm}[t]
\caption{Subspace VAE Channel Estimator}\label{alg:subVAE}
\textbf{Offline Training Phase}
\begin{algorithmic}[1]
\Require Training dataset $\mathcal{H}=\{\vh_t\}_{t=1}^T$
\State Fit the VAE by optimizing the ELBO, cf.~\cite{Baur2024}
\par\vskip.5\baselineskip\hrule height .4pt\par\vskip.5\baselineskip
\renewcommand{\algorithmicensure}{\textbf{Online Channel Estimation}}
\Ensure
\Require $\my=[\vy(1),\dots,\vy(N)]$, $\matp$, $\sigma^2$ 
\State $\hat{\mc}_{\vy\mid\mh} \leftarrow \frac{1}{N}\my\my^\He$
\State $\hat{\mv} \leftarrow J$ dominant eigenvectors of $\hat{\mc}_{\vy\mid\mh}$
\State $\my_p=[\vy_{p,1},\dots,\vy_{p,J}] \leftarrow \my^\prime_p\matp^\dagger$
\For {$j = 1,\dots,J$}
\State $\tilde{\vy} \leftarrow \mv\mv^\He\vy_{p,j}$
\State $\vmu_\vth(\vz), \mc_\vth(\vz) \leftarrow \mathrm{VAE}(\tilde{\vy})$
\State $\hat{\vh}_j \leftarrow \mv\mv^\He\mc_\vth(\vz)\mv \left(\mv^\He\mc_\vth(\vz)\mv + \sigma^2\I_J\right)^{-1}$
\State \hspace{\algorithmicindent} $\times(\mv^\He\vy_p - \mv^\He\vmu_\vth(\vz)) - \mv\mv^\He\vmu_\vth(\vz)$
\EndFor
\State\Return $\hat{\vh}_j, \forall j=1, \dots,J$
\end{algorithmic}
\end{algorithm}

\begin{algorithm}[t]
\caption{Projected VAE Channel Estimator}\label{alg:projVAE}
\textbf{Offline Training Phase}
\begin{algorithmic}[1]
\Require Training dataset $\mathcal{H}=\{\vh_t\}_{t=1}^T$
\State Fit the VAE by optimizing the ELBO, cf.~\cite{Baur2024}
\par\vskip.5\baselineskip\hrule height .4pt\par\vskip.5\baselineskip
\renewcommand{\algorithmicensure}{\textbf{Online Channel Estimation}}
\Ensure
\Require $\my=[\vy(1),\dots,\vy(N)]$, $\matp$, $\sigma^2$ 
\State $\hat{\mc}_{\vy\mid\mh} \leftarrow \frac{1}{N}\my\my^\He$
\State $\hat{\mv} \leftarrow J$ dominant eigenvectors of $\hat{\mc}_{\vy\mid\mh}$
\State $\my_p=[\vy_{p,1},\dots,\vy_{p,J}] \leftarrow \my^\prime_p\matp^\dagger$
\For {$j = 1,\dots,J$}
\State $\tilde{\vy} \leftarrow \mv\mv^\He\vy_{p,j}$
\State $\vmu_\vth(\vz), \mc_\vth(\vz) \leftarrow \mathrm{VAE}(\tilde{\vy})$
\State $\hat{\vh}_j \leftarrow \vmu_\vth(\vz) + \mc_\vth(\vz) \left(\mc_\vth(\vz) +\mc_{\tilde{\vn}}\right)^{-1}(\tilde{\vy}-\vmu_\vth(\vz))$
\State \hspace{\algorithmicindent} $\times(\tilde{\vy}-\vmu_\vth(\vz))$
\EndFor
\State\Return $\hat{\vh}_j, \forall j=1, \dots,J$
\end{algorithmic}
\end{algorithm}

After successfully training the \ac{VAE}, 
the output is a local parameterization of $p(\vh)$ as conditionally Gaussian
\begin{align}
    \vh\mid\vz \sim p_\vth(\vh\mid\vz) .
\end{align}
As analyzed in~\cite{Baur2024}, it is a reasonable approximation to set
\begin{align}
    p(\vz\mid\tilde{\vy}) = 
    \begin{cases}
        1 \quad\text{if } \vz = \vmu_\vphi(\tilde{\vy}), \\
        0 \quad\text{otherwise}.
    \end{cases}
\end{align}
Based on this parameterization, we can formulate the semi-blind \ac{VAE}-based estimators as
\begin{align}
	\hat{\vh}_{\text{proj. VAE}} = \;&\vmu_\vth(\vz) + \mc_\vth(\vz) \left(\mc_\vth(\vz) +\mc_{\tilde{\vn}}\right)^{-1}(\tilde{\vy}-\vmu_\vth(\vz)), \label{eq:projVAE}
\end{align}
and
\begin{align}
    	\hat{\vh}_{\text{sub. VAE}} = \;&\mv\mv^\He\mc_\vth(\vz)\mv \left(\mv^\He\mc_\vth(\vz)\mv + \sigma^2\I_J\right)^{-1}\nonumber\\
     &\times(\mv^\He\vy_p - \mv^\He\vmu_\vth(\vz)) - \mv\mv^\He\vmu_\vth(\vz).\label{eq:subVAE}
\end{align}
The respective estimators are summarized in~\cref{alg:subVAE} and~\cref{alg:projVAE}.
For a more detailed introduction into the \ac{VAE} framework and its usage for \ac{CME} parameterization, we refer the reader to~\cite{Baur2024}.

\begin{figure*}[t]
\centering
\includegraphics[]{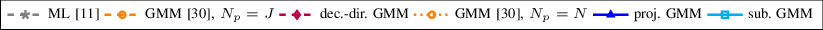}
    \centering
  \subfloat[QPSK]{
	\centering
 \includegraphics[]{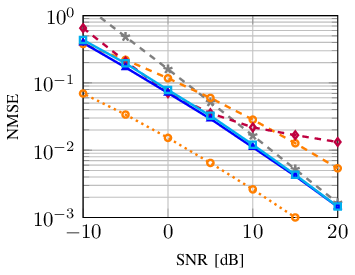}
 \label{fig:nmse_meas_qpsk}
     }%
    \subfloat[16QAM]{
	\centering
 \includegraphics[]{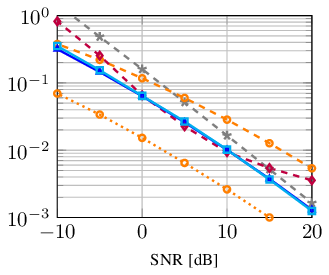}
 \label{fig:nmse_meas_16qam}
     }%
  \subfloat[Gaussian Symbols]{
	\centering
    \includegraphics[]{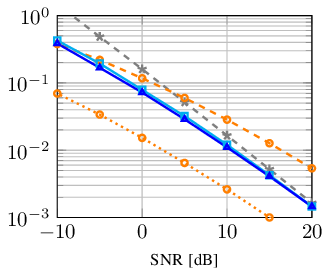}
    \label{fig:nmse_meas_gauss}
   	}%
 \caption{
	NMSE over the \ac{SNR} for given channel estimations in a $J=8$ user and $M=64$ antennas scenario based on $N=200$ observations, including $N_p=J$ pilots (if not stated otherwise) and (a) QPSK, (b) 16QAM, and (c) Gaussian data symbols, using measurement data (Sec.~\ref {sec:meas}).
	}
 \label{fig:nmse_const}
\end{figure*}

\subsection{Complexity Analysis}

The standalone \ac{GMM} estimator proposed by \cite{Koller2022} precomputes the filters used for the individual components, resulting in a complexity of $\mathcal{O}(KM^2)$. 
For the standalone \ac{VAE}, the complexity is given as $\mathcal{O}(DM^2)$~\cite{Baur2024}, where $D$ denotes the number of layers in the \ac{VAE}'s forward pass.
For our semi-blind methods, the subspace calculation requires $\mathcal{O}((N+J)M^2)$.
This results from calculating the sample covariance matrix with $\mathcal{O}(NM^2)$ and taking the eigenvectors of the $J$ largest eigenvalues for the solution of~\eqref{eq:dmle2}.
Using the \ac{PAST} algorithm~\cite{Yang1995}, the computational complexity of the subspace calculation reduces to $\mathcal{O}(JM)$ for every update.
In the case of the subspace \ac{GMM}, the $K$ \ac{LMMSE} estimates can not be precomputed, which results in a complexity of $\mathcal{O}(K(M^2 + JM^2 + J^3))$.
Similarily, the subspace \ac{VAE} exhibits a complexity of $\mathcal{O}(DM^2 + JM^2 + J^3)$.
For the projected versions of the \ac{GMM} and \ac{VAE} the complexity becomes $\mathcal{O}(KM^2 + JM^2)$ and $\mathcal{O}(DM^2 + JM^2)$, respectively.
One should note that the calculation for each of the $K$ components in the \ac{GMM} can be parallelized. 
Similarly, the computations in the \ac{VAE}'s convolutional layers can be parallelized, mitigating the complexity.
\section{Baseline estimators}

The following baseline channel estimators are considered for comparison with our methods.
Based on the found subspace $\range(\mv)$ we can formulate the pilot-based \ac{ML} estimator as $\hat{\vh}_{\text{ML}} = \mv \mv^\He \vy_{p}$,
which is the closed-form solution to~\eqref{eq:ml_est}.
This can be interpreted as the subspace-adjusted version of the conventional least squares (LS) channel estimator given as $\hat{\vh}_\text{LS} = \vy_p$.

Another estimator is based on the sample covariance matrix, which we can compute from the 
training data set ${\mathcal{H}}$ to infer the channels' global statistics as
\begin{align}
 \mc_s = \frac{1}{|{\mathcal{H}}|} \sum_{{\vh}\in {\mathcal{H}}} {\vh}{\vh}^\He. 
    \label{eq:smplcov}
\end{align}
We can use the matrix $\mc_s$ as a statistical prior to parameterize the semi-blind channel estimators outlined in Section \ref{sec:sub} and Section \ref{sec:proj} as 
\begin{align}
	\hat{\vh}_{\text{sub. s-cov}} =\mv\mv^\He\mc_s\mv \left(\mv^\He\mc_s\mv + \sigma^2\I_J\right)^{-1}\mv^\He\vy_{p} ,
\end{align}
and
\begin{align}
	\hat{\vh}_{\text{proj. s-cov}} =\mc_s \left(\mc_s + \mc_{\Tilde{\vn}}\right)^{-1}\matp_\mh\vy_{p}.
\end{align}

\update{
To evaluate the gain achieved by the semi-blind adaptions of the \ac{CGLM} channel estimators, we also compare to the pilot-only \ac{GMM}~\cite{Koller2022} and \ac{VAE}~\cite{Baur2024}.
}

Lastly, we compare our proposed methods to two iterative algorithms optimizing the \ac{ML} formulation in~\eqref{eq:ml_opt}, namely the \ac{EM} from~\cite{Nayebi2018} and a \ac{MP} variant similar to~\cite{Liu2009}, which we run both until convergence or $500$ iterations, whichever comes first.

\section{Numerical Simulations}

\begin{figure*}
\centering
\includegraphics[]{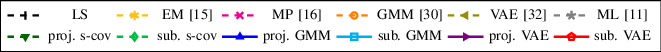}
    \centering
  \subfloat[Spatial Channel Model (Sec.~\ref{sec:3gpp})]{
	\centering
 \includegraphics[]{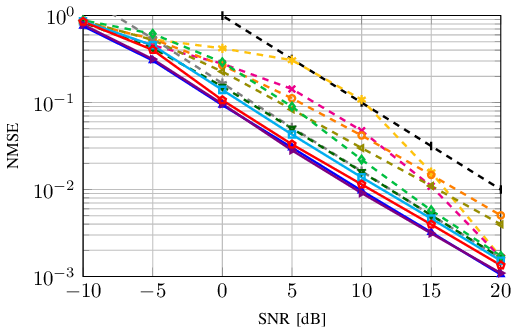}
 \label{fig:nmse_3gpp_SNR}
     }%
  \subfloat[Measurement Data (Sec.~\ref{sec:meas})]{
	\centering
    \includegraphics[]{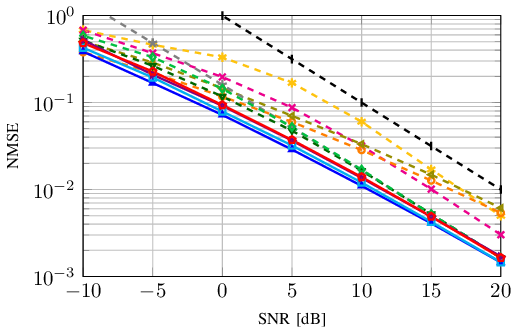}
    \label{fig:nmse_meas_SNR}
   	}%
 \caption{
    NMSE over the \ac{SNR} for given channel estimations in a $J=8$ user and $M=64$ antennas scenario based on $N=200$ observations, including $N_p=J$ pilots and Gaussian symbols, using (a) the spatial channel model and (b) measurement data.
	}
 \label{fig:nmse_SNR}
\end{figure*}

\begin{figure}
\centering
 \includegraphics[]{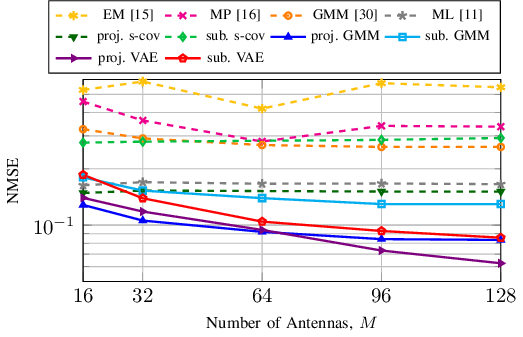}
 \caption{
	NMSE over the number of antennas for given channel estimations in a $J=M/8$ user scenario based on $N=25J$ observations, including $N_p=J$ pilots and Gaussian symbols at \ac{SNR} $=0$ dB, using the spatial channel model.
	}
 \label{fig:nmse_antenna}
\end{figure}

To evaluate our proposed methods, we use channel realizations, which are normalized with $\mathbb{E}\left[\|\vh\|^2\right]=M$.
Thus, we can define the \ac{SNR} as $\sigma^{-2}$.
Further, the \ac{NMSE} defined as  
\begin{align}
    \text{NMSE} = \frac{1}{ML}\sum_{\ell=1}^L \|\vh_\ell - \hat{\vh}_\ell\|^2,
\end{align}
is used to characterize the estimators' performances based on
$L = 10^3$ unseen channel samples stemming from the channel models detailed in \cref{sec:3gpp,sec:meas}.
We use $\mathcal{H} = \{\vh_t\}^T_{t=1}$ with $T=1.5\cdot10^5$ training samples from the respective channel model to train the \ac{GMM} and \ac{VAE}, where we set the number of components to $K=64$ and the latent dimension to $Z=32$, respectively.
Further, in the case of the \ac{VAE} we allow non-zero values for $\vmu_\vth(\vz)$, as we use the circulant approximation for the spatial channel model and the block-Toeplitz property is not perfectly fulfilled for the measurement data, due to hardware imperfections.
\update{The \ac{VAE}'s remaining implementation details are the same as provided in~\cite{Baur2024}.}
The assumption of spatial uncorrelated channels, as used in \eqref{eq:noise_tilde}, only holds for the spatial channel model of \cref{sec:3gpp}.
For the case of the measurement campaign described in \cref{sec:meas}, we approximate the noise covariance matrix in~\eqref{eq:cov_y} 
as
\begin{align}
	\mc_{\tilde{\vn}} \approx \sigma^2\frac{J}{M}\I_M. \label{eq:proj_noise_approx}
\end{align}
\update{Using the $T$ training samples to empirically approximate the projected noise covariance matrix~\eqref{eq:cov_y} did not result in a different performance compared to using the approximation~\eqref{eq:proj_noise_approx}.}
The ``s-cov'' variants (``sub. s-cov'' and ``proj. s-cov'') utilize the same training samples. 
For most of the simulations, the number of \ac{BS} antennas is set to $M=64$, cf.~\cref{sec:3gpp,sec:meas}, serving $J=8=M/8$ number of users, a representative operating point~\cite[Chap. 1.3.3]{Bjoernson2017}.
Further, if not stated otherwise, the number of snapshots is set to $N=200$, corresponding to a scenario that allows high channel dispersion and mobility, e.g., up to $135$ kph, c.f.~\cite[Chap. 2.1]{Bjoernson2017}.

\update{
In~\cref{fig:nmse_const}, the channel estimation performances are compared for different symbol constellations using the proposed \ac{GMM} based semi-blind channel estimation variants.
Additionally, we compare to the purely pilot-based \ac{GMM} channel estimator with $N_p=J$ and $N_p=N$ number of pilots, where the latter assumes all sent symbols to be known by the receiver.
We utilize Gaussian symbols with $x_j(n) \sim \NC (0, P_j=1/J)$ such that $\sum_{j=1}^J P_j =1$ as well as QPSK and $16$QAM.
In the latter two cases, also a decision directed \ac{GMM} is evaluated, which decodes the sent symbols based on a pilot-based \ac{GMM} channel estimate and uses the decoded symbols as additional pilots in a second stage.
For the decoding, we utilize \ac{LMMSE} equalization with a subsequent mapping to the closed constellation point.}

\update{
Firstly, we see no qualitative difference between the different symbol constellations.
}%
Using a continuous symbol constellation has a negligible effect on the simulation results, as also previously observed in~\cite{Nayebi2018} \update{and, hence, for the rest of this work, we utilize Gaussian symbols.}
\update{
Further, we see in~\cref{fig:nmse_const} an improvement from the pilot-only \ac{GMM} to our proposed semi-blind variants, showcasing again the advantage of utilizing the payload symbols in channel estimation.
The full pilot \ac{GMM} using $N_p=N$ pilot symbols achieves the \ac{SNR} improvement as shown in~\cref{sec:sys}.
The decision directed \ac{GMM} based estimator saturates for high \ac{SNR} values, making it an inferior semi-blind strategy.
}%

\begin{figure*}
\centering
\includegraphics[]{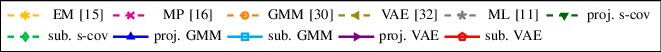}
    \centering
  \subfloat[Spatial Channel Model (Sec.~\ref{sec:3gpp})]{
	\centering
 \includegraphics[]{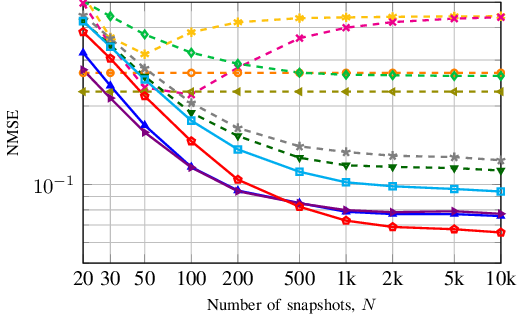}
 \label{fig:nmse_3gpp_snaps}
     }%
  \subfloat[Measurement Data (Sec.~\ref{sec:meas})]{
	\centering
    \includegraphics[]{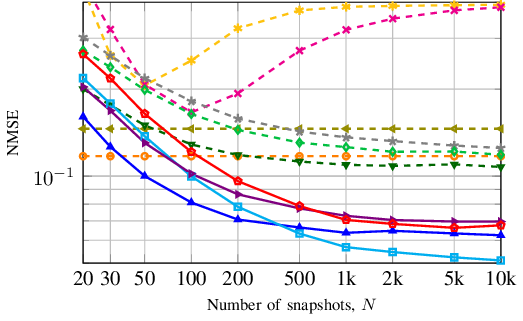}
    \label{fig:nmse_meas_snaps}
   	}%
\caption{
	NMSE over the number of observations for given channel estimations in a $J=8$ user and $M=64$ antennas scenario based on $N$ observations, including $N_p=J$ pilots and Gaussian symbols at \ac{SNR} $=0$ dB, using (a) the spatial channel model and (b) measurement data.
	}
 \label{fig:nmse_snaps}
\end{figure*}

\cref{fig:nmse_3gpp_SNR} and \cref{fig:nmse_meas_SNR} show the different channel estimators' performances with respect to the \ac{SNR} for the spatial channel model (cf. \cref{sec:3gpp}) and measurement data (cf. \cref{sec:meas}), respectively. 
One can see that the semi-blind methods utilizing the \acp{CGLM} perform the best across the whole \ac{SNR}.
The projected variants of the \acp{CGLM} slightly outperform their subspace counterpart for most \ac{SNR} values, which follows the derivations in~\cref{sec:perf}.
Interestingly, the order of the semi-blind \ac{GMM} and semi-blind \ac{VAE} depends on the utilized channel model.
In \cref{fig:nmse_3gpp_SNR}, the projected \ac{GMM} and \ac{VAE} show both the best overall result, whereas in \cref{fig:nmse_meas_SNR}, the projected \ac{GMM} outperforms the \ac{VAE}-based estimator. 
Additionally, we see that in \cref{fig:nmse_3gpp_SNR}, the subspace \ac{VAE} outperforms the subspace \ac{GMM}, and in \cref{fig:nmse_meas_SNR}, the results are vice versa.
This ordering follows the ordering of the standalone version, where the plain \ac{GMM} is better than the plain \ac{VAE} in the case of the measurement data and worse for the spatial channel model.
\update{This comes from the fact that the Toeplitz assumption at the \ac{VAE} may not be fulfilled due to impairments in the real measurement data}.
For high \ac{SNR} values, all semi-blind variants approach each other except the \ac{EM} and \ac{MP} methods.
In \cref{fig:nmse_3gpp_SNR}, the \ac{EM} and \ac{MP} drastically improve from $15$dB to $20$dB, showing similar performance at $20$dB as the other semi-blind methods,
whereas, in \cref{fig:nmse_meas_SNR}, they show inferior results also for high \ac{SNR}.
The \ac{CGLM}-based approaches keep a slight advantage even for high \ac{SNR} values, which can be attributed to the fact that prior information is beneficial even for high \ac{SNR}.
A notable observation is that, in the mid-\ac{SNR} range, the semi-blind \ac{CGLM} variants outperform all related estimators by roughly $3$ dB.

\update{In~\cref{fig:nmse_antenna} we evaluate the respective estimators across different numbers of antennas at the \ac{BS}.
In order to make the comparison fair, we keep the number of users at a constant ratio of $J=M/8$ and set the number of snapshots to $N=25J$.
With this, we only see an improvement for higher numbers of antennas for the \acp{CGLM}, while the other methods roughly keep their performance.
Due to the constant parameter ratio, the \ac{VAE} variants' drastic performance gains for a higher number of antennas solely come from the \ac{VAE}'s strong capability of modeling high-dimensional data.
}%

For our proposed strategies, the accuracy of the estimated subspace $\range(\hat{\mv})$ influences the performance and, hence, the \ac{NMSE} depends on the number of snapshots $N$ as shown in \cref{fig:nmse_snaps}.
We see that for an increasing number of snapshots, the \ac{NMSE} of our proposed methods decreases.
In the case of the spatial channel model (\cref{fig:nmse_3gpp_snaps}), the projected \ac{GMM} and \ac{VAE} perform best for low numbers of snapshots, where the standalone \ac{VAE} surpasses all other methods for $N=20$. 
Additionally, we observe in \cref{fig:nmse_3gpp_snaps} that for high $N$, the subspace \ac{VAE} becomes the best of all considered methods.
In \cref{fig:nmse_meas_snaps}, we observe again that for the measurement data, the semi-blind \ac{GMM} variants perform the best, where for high $N$ the subspace \ac{GMM} and low $N$ the projected \ac{GMM} outperforms all other methods.
Again, for less than $30$ snapshots, the semi-blind methods are outperformed by the standalone \ac{GMM} and \ac{VAE} due to inaccuracies in estimating the subspace with a low number of payload data symbols.
For both utilized channel models, the superior \ac{CGLM} subspace variant surpasses the projected variant for high numbers of snapshots, converging to a lower error level.
Thus, in practice, where, in general, uncorrelated Rayleigh fading is not the case, there are cases where the subspace \ac{CGLM} outperforms its projected counterpart.

A critical decrease in performance can be observed for the \ac{EM} and \ac{MP} algorithms.
Here, the \ac{NMSE} increases after a certain point when increasing the number of snapshots. 
Even though the minimum appears at different $N$, the overall behavior exhibits similarities. 
This is because both methods optimize the joint \ac{ML} formulation in~\eqref{eq:ml_opt}, where the optimization of the second term becomes dominant for a high number of snapshots.
Hence, the pilot observation's impact, which is relevant to estimating the channel phase, vanishes.

\begin{figure*}
\centering
\includegraphics[]{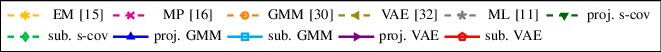}
    \centering
  \subfloat[Spatial Channel Model (Sec.~\ref{sec:3gpp})]{
	\centering
 \includegraphics[]{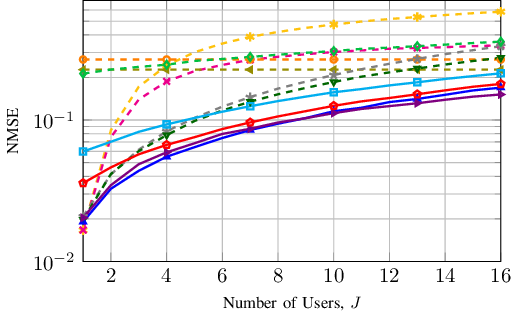}
 \label{fig:nmse_3gpp_user}
     }%
  \subfloat[Measurement Data (Sec.~\ref{sec:meas})]{
	\centering
    \includegraphics[]{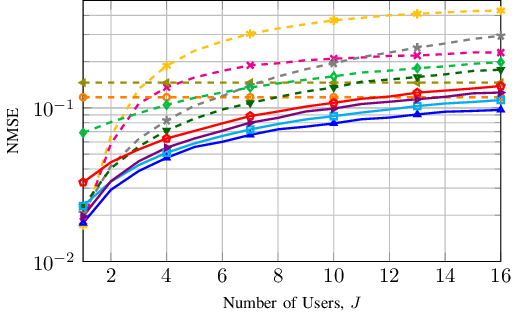}
    \label{fig:nmse_meas_user}
   	}%
 \caption{
	NMSE over the number of users for given channel estimations in a $M=64$ antennas scenario based on $N=200$ observations, including $N_p=J$ pilots and Gaussian symbols at \ac{SNR} $=0$ dB, using (a) the spatial channel model and (b) measurement data.
	}
 \label{fig:nmse_user}
\end{figure*}

The dimension of the subspace $\range(\hat{\mv})$ directly influences the proposed methods' estimation qualities as shown in \cref{fig:nmse_3gpp_user} and \cref{fig:nmse_meas_user}.
For example, in the extreme case where the number of users in the system is equal to the number of \ac{BS} antennas ($J=M$), the solution to~\eqref{eq:dmle2} becomes $\hat{\mv}\hat{\mv}^\He = \I$ and, hence, as the number of users in the system increases all semi-blind estimators approach their respective purely pilot based version.
We restrict our simulations within the interval of $J \leq M/4=16$, which is said to be the preferred operating regime in massive MIMO~\cite[Chap. 1.3.3]{Bjoernson2017}, and set the number of snapshots to $N=200$.
In the case of a single user, all semi-blind variants exhibit similar performance, except for the subspace sample covariance estimator, the subspace \ac{VAE}, and in the case of the spatial channel model (\cref{fig:nmse_3gpp_user}), the subspace \ac{GMM}.
For all other considered numbers of users, the proposed projected \ac{CGLM} methods outperform all other channel estimators.
Additionally, for the spatial channel model in \cref{fig:nmse_3gpp_user}, the subspace \ac{GMM} also shows inferior results to the other \ac{CGLM}-based methods for all numbers of users.

Overall, we can conclude that the proposed semi-blind \acp{CGLM} show superior channel estimation performance across all different setups.
Depending on the used channel model, either the semi-blind \acp{GMM} or the semi-blind \acp{VAE} result in slightly better \ac{NMSE}, where only in the case of the spatial channel model, the subspace \ac{GMM} shows slightly worse performance compared to the other proposed methods.
Moreover, the projected \acp{CGLM} outperform their respective subspace counterpart for most simulated operating points, showing the proposed projection method's superiority.

\section{Conclusion}

This work presented a novel semi-blind channel estimation technique based on the class of \acp{CGLM}.
To this end, two methods that incorporate subspace knowledge into the well-known \ac{LMMSE} estimator are discussed.
Both methods exploit the estimated subspace derived from the dominant eigenvectors of sample covariance matrices constructed using the received symbols.
A theoretical analysis of the methods showed the proposed projection-based estimator's superior estimation quality for uncorrelated Rayleigh fading channels.
Secondly, we showed how two examples from the class of \acp{CGLM}, i.e., the \ac{GMM} and \ac{VAE}, can be used to parameterize these estimators. 
Extensive simulations based on real-world measurement and spatial channel model data demonstrated the proposed methods' superior estimation performances compared to standard semi-blind channel estimators.

\appendix

\subsection{MSE of Projected LMMSE}
\label{app:projMSE}

For any linear estimator $\hat{\vh}=\mw\vy_p$, the \ac{MSE} is given as
\begin{align}
    \mse &= \E\left[\|\vh-\hat{\vh}\|^2\right] = \E\left[\tr\left(\left(\vh-\hat{\vh}\right)\left(\vh^\He-\hat{\vh}^\He\right)\right)\right]\\
    &=\E\left[\tr(\vh\vh^\He) - 2\tr(\vh\vy^\He\mw^\He) + \tr(\mw\vy\vy^\He\mw^\He)\right]. \label{eq:mse_terms}
\end{align}
For the case of the projected \ac{LMMSE} the second term in~\eqref{eq:mse_terms} can be rewritten as
\begin{align}
    &\E\left[\tr\left(\vh\Tilde{\vy}^\He\mw^\He\right)\right] \\
    &\quad\quad= \E\left[\tr\left(\vh\vh^\He\mw^\He\right)\right]\\
    &\quad\quad= \E\left[\tr\left(\vh\vh^\He\left(\mc+\sigma^2\frac{J}{M}\I_M\right)^{-1}\mc\right)\right] \\
    &\quad\quad= \tr\left(\mc\left(\mc+\sigma^2\frac{J}{M}\I_M\right)^{-1}\mc\right).
\end{align}
Similarly for the third term in~\eqref{eq:mse_terms} we have
\begin{align}
    &\E\left[\tr\left(\mw\Tilde{\vy}\Tilde{\vy}^\He\mw^\He\right)\right] \\
    &\quad\quad= \E\left[\tr\left(\mw\left(\vh\vh^\He+\Tilde{\vn}\Tilde{\vn}^\He\right)\mw^\He\right)\right] \\
    &\quad\quad= \E\bigg[\tr\bigg(\mc\left(\mc+\sigma^2\frac{J}{M}\I_M\right)^{-1}\left(\vh\vh^\He+\Tilde{\vn}\Tilde{\vn}^\He\right)\nonumber\\
    &\quad\quad\quad\times\left(\mc+\sigma^2\frac{J}{M}\I_M\right)^{-1}\mc\bigg)\bigg]\\
    &\quad\quad= \tr\left(\mc\left(\mc+\sigma^2\frac{J}{M}\I_M\right)^{-1}\mc\right),
\end{align}
where we assume that $\E\left[\Tilde{\vn}\Tilde{\vn}^\He\right]=\sigma^2\frac{J}{M}\I_M$.
From this, the overall \ac{MSE} in~\eqref{eq:mse_proj2} follows directly.

\subsection{MSE of Subspace LMMSE for Rayleigh Fading}
\label{app:subMSE}

In the case of uncorrelated Rayleigh fading, the subspace \ac{LMMSE} filter is given as
\begin{align}
    \mw_\sub = \frac{1}{1+\sigma^2}\mv\mv^\He.
\end{align}
Using this filter the second term in~\eqref{eq:mse_terms} can be rewritten as
\begin{align}
    &\E\left[\tr\left(\vh\vy^\He\mw_\sub^\He\right)\right] \\
    &\quad\quad= \E_\vh\left[\E\left[\tr\left(\vh\vh^\He\mw_\sub^\He\right)\mid\vh\right]\right]\\
    &\quad\quad= \frac{1}{1+\sigma^2}\E_\vh\left[\E\left[\tr\left(\vh\vh^\He\mv\mv^\He\right)\mid\vh\right]\right] \\
    &\quad\quad= \frac{1}{1+\sigma^2}\E_\vh\left[\tr\left(\vh\vh^\He\right)\right] \\
    &\quad\quad= \frac{1}{1+\sigma^2} M.
\end{align}
Similarly for the third term in~\eqref{eq:mse_terms} we have
\begin{align}
    &\E\left[\tr\left(\mw_\sub\vy\vy^\He\mw_\sub^\He\right)\right] \\
    &\quad\quad= \E_\vh\left[\E\left[\tr\left(\mw_\sub\vh\vh^\He\mw_\sub^\He\right)\mid\vh\right]\right] \nonumber\\
    &\quad\quad\quad+ \E_\vh\left[\E\left[\tr\left(\mw_\sub\vn\vn^\He\mw_\sub^\He\right)\mid\vh\right]\right]\\
    &\quad\quad= \frac{1}{(1+\sigma^2)^2}\Big[\E_\vh\left[\E\left[\tr\left(\mv\mv^\He\vh\vh^\He\mv\mv^\He\right)\mid\vh\right]\right] \nonumber\\
    &\quad\quad\quad+ \E_\vh\left[\E\left[\tr\left(\mv\mv^\He\vn\vn^\He\mv\mv^\He\right)\mid\vh\right]\right]\Big]\\
    &\quad\quad= \frac{1}{(1+\sigma^2)^2}\left[\E_\vh\left[\tr\left(\vh\vh^\He\right)\right]+\E_\vh\left[\sigma^2\tr\left(\mv\mv^\He\right)\right] \right]\\
    &\quad\quad= \frac{1}{(1+\sigma^2)^2} (M + J\sigma^2).
\end{align}
The overall subspace variant's \ac{MSE} for $\mc=\I_M$ is then
\begin{align}
    \mse_\mathrm{iid}^\sub &= M - 2\frac{1}{1+\sigma^2}M + \frac{1}{(1+\sigma^2)^2}(M+J\sigma^2) \\
    &= \frac{\sigma^2(M\sigma^2 + J)}{(1+\sigma^2)^2}.
\end{align}

\update{
\subsection{Derivation of Bayesian Cramer-Rao Bounds}
\label{app:bcrb}
The \ac{BCRB}, which is the lower bound on the \ac{MSE} of any estimator $\hat{\vh}(\my)$, is given as~\cite{VanTrees2001}
\begin{align}
    \mathbb{E}\left[\|\hat{\vh}(\my) - \vh\|^2\right] \geq
    \tr\left(\left[\left(\mathbb{E}_{{\vh}}[\mathcal{I}({{\vh}})] + \bm{J}_p\right)^{-1}\right]_{:M,:M}\right),\label{eq:bcrb_general}
\end{align}
where $\mathcal{I}({{\vh}})$ is the Fisher information matrix corresponding to the inverse of the unbiased \ac{CRB},
and
\begin{align}
    \bm{J}_p 
    &= \mathbb{E}_\vh\left[\frac{\partial\log p(\vh)}{\partial \vh^\He}  \frac{\partial \log p(\vh)}{\partial \vh} \right] \\
    &= \mathrm{blckdiag}({\mc}_1,\dots,{\mc}_J)^{-1}
    ,
\end{align}
We denote with $[\cdot]_{:M,:M}$ in~\eqref{eq:bcrb_general} the part of the overall \ac{BCRB} matrix corresponding to the upper left block of size $M\times M$, where we assume without loss of generality to index the users such that the user of interest is the first one.
}%

\update{
For the case of $\mc_j=\I_M$ for all $j\in\{1,\dots,J\}$ the prior information $\bm{J}_p$ is given as
\begin{align}
    \bm{J}_p = \I_{JM}.
\end{align}
For the deterministic \ac{BCRB}, we use the formulation derived in~\cite{Zhang2025}, which results in
\begin{align}
    &\mathbf{BCRB}_\mathrm{iid,d} \nonumber\\
    &\quad=\sigma^2\left((\mx^*\mx^\T) \otimes \matp_\mh^\perp + (\matp^*\matp^\T)\otimes \matp_\mh + \sigma^2\I_{JM}\right)^{-1},
\end{align}
with $\matp_\mh^\perp = \I_M - \matp_\mh$.
}

\update{
Now using the Fisher information matrix for the stochastic \ac{CRB} as provided in~\cite{Nayebi2018}, we introduce $\Bar{\vh} = [\Re\{\vh_1^\T\}, \Im\{\vh_1^\T\},\dots,\Re\{\vh_J^\T\}, \Im\{\vh_J^\T\}]^\T$, which gives us the prior information as
\begin{align}
    \bm{J}_p &=\mathbb{E}_{\Bar{\vh}}\left[\frac{\partial\log p(\Bar{\vh})}{\partial \Bar{\vh}^\T}  \frac{\partial\log p(\Bar{\vh})}{\partial \Bar{\vh}} \right] \\
    &=2\, \mathrm{blckdiag}(\Bar{\mc}_1,\dots,\Bar{\mc}_J)^{-1},
\end{align}
with
\begin{align}
    \Bar{\mc}_j =
    \begin{bmatrix}
        \Re\{\mc_j\} & -\Im\{\mc_j\} \\
        \Im\{\mc_j\} & \Re\{\mc_j\}
    \end{bmatrix}.
\end{align}
Thus, the \ac{BCRB} for uncorrelated Rayleigh fading is given as
\begin{align}
    \mathbf{BCRB}_\mathrm{iid,s} = \left(\left(\frac{2N_p}{J\sigma^2} + 2\right)\I_{2JM} + \mr\right)^{-1},
\end{align}
where $\mr$ denotes the part depending on the payload symbols defined in~\cite{Nayebi2018}.
}

\update{
\subsection{Proof of \cref{lem:mse_bound}}\label{app:mse_bound}
To this end, let us consider the \ac{MSE} for a fixed channel matrix $\mh$, where the expectation is only taken over the noise and Gaussian symbol realizations as
\begin{align}
    &\E_{\mx,\mn} [\|\hat{\vh}_\proj(\hat{\mv}) - \hat{\vh}_\proj({\mv})\|^2] \nonumber\\
    &\quad= \E_{\mx,\mn} [\|\mw_\proj (\hat{\mv}\hat{\mv}^\He - \mv\mv^\He)\vy\|^2] \\
    &\quad\leq\lambda_{\max}(\mw_\proj^\He\mw_\proj)\E_{\mx}\left[(\tr(\ma) + \E_{\mn}[\tr(\mb)])\right]
\end{align}
with
\begin{align}
    \tr(\ma) &= \tr((\hat{\mv}\hat{\mv}^\He - \mv\mv^\He)\vh\vh^\He(\hat{\mv}\hat{\mv}^\He - \mv\mv^\He)) \\
    &= \tr((\I - \hat{\mv}\hat{\mv}^\He)\vh\vh^\He(\I - \hat{\mv}\hat{\mv}^\He)) \\
    &= \|(\I - \hat{\mv}\hat{\mv}^\He)\vh\|^2 \\
    &= \|(\I - \hat{\mv}\hat{\mv}^\He)\mv\vs\|^2 \\
    &= \sum_{i\leq J}\sum_{j>J}|\hat{\vv}_j^\He\vv_is_i|^2 \\
    &\leq |s_{\max}|^2\sum_{j\leq J}\sum_{i>J}|\hat{\vv}_j^\He\vv_i|^2 \\
    &= |s_{\max}|^2\sum_{i\leq J}\sum_{j>J}|\hat{\vv}_j^\He\vv_i|^2
\end{align}
where $\vh=\mv\vs$ with $\vs=[s_1,\dots,s_J]^\T$ is the channel of interest's decomposition in terms of $\mv$, and $s_{\max}=\max_i s_i$.
Furthermore, we have
\begin{align}
    \E_\mn[\tr(\mb)] &= \E_\mn[\tr((\hat{\mv}\hat{\mv}^\He - \mv\mv^\He)\vn\vn^\He(\hat{\mv}\hat{\mv}^\He - \mv\mv^\He))] \\
    &= \sigma^2 (\tr(\hat{\mv}\hat{\mv}^\He + \mv\mv^\He) - 2\tr(\hat{\mv}\hat{\mv}^\He\mv\mv^\He)) \\
    &= 2\sigma^2 \left[J - \sum_{i,j \leq J}|\hat{\vv}_j^\He\vv_i|^2\right] \\
    &= 2\sigma^2 \sum_{j\leq J}\sum_{i> J}|\hat{\vv}_j^\He\vv_i|^2
\end{align}
Thus,
\begin{align}
    &\E_{\mn} [\|\hat{\vh}_\proj(\hat{\mv}) - \hat{\vh}_\proj({\mv})\|^2] \nonumber\\
    &\quad\leq\lambda_{\max}(\mw_\proj^\He\mw_\proj)\left(|s_{\max}|^2+2\sigma^2\right)\sum_{j\leq J}\sum_{i> J}|\hat{\vv}_j^\He\vv_i|^2.
\end{align}
Now, let us consider the following corollary from~\cite[Corollary 4.1]{Loukas2017} about the inner product of eigenvectors of sample and true covariance matrices.
\begin{corollary}
Let us denote with $\lambda^\prime_j$ and $\vv_j$ the $i$-th eigenvalue and eigenvector of the true covariance matrix, respectively.
Then, for any weights $w_{ji}$, which are non-zero when $\lambda^\prime_i\neq\lambda^\prime_j$ and $\mathrm{sgn}(\lambda^\prime_i - \lambda^\prime_j)2\lambda^\prime_i > \mathrm{sgn}(\lambda^\prime_i - \lambda^\prime_j)(\lambda^\prime_i + \lambda^\prime_j)$,
and real $\varepsilon>0$
    \begin{align}
    &P\left(\sum_{j\leq J}\sum_{i > J}w_{ji}|\hat{\vv}_j^\He\vv_i|^2 \leq \varepsilon\right) \nonumber\\
    &\quad\geq 1 - \sum_{j \leq J}\sum_{i > J}\frac{4 k^2_iw_{ji}}{N\varepsilon(\lambda^\prime_j - \lambda^\prime_i)^2},
\end{align}
where $k_i^2=\mathbb{E}[\|\vy\vy^\He\vv_i\|^2] - \lambda_i^{\prime,2}$.
\end{corollary}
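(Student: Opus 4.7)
The plan is to combine a Markov bound on the non-negative weighted sum with a Davis--Kahan type perturbation identity for individual eigenvector overlaps, and then close out with a second-moment computation for the sample covariance that produces the stated variance factor $k_i^2/N$.

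First I would observe that $X = \sum_{j\leq J}\sum_{i>J} w_{ji}|\hat{\vv}_j^\He\vv_i|^2$ is non-negative, so Markov's inequality gives
\begin{equation*}
P(X > \varepsilon) \leq \frac{1}{\varepsilon}\sum_{j,i} w_{ji}\,\mathbb{E}[|\hat{\vv}_j^\He\vv_i|^2].
\end{equation*}
It therefore suffices to show $\mathbb{E}[|\hat{\vv}_j^\He\vv_i|^2] \leq 4 k_i^2/\bigl(N(\lambda'_j - \lambda'_i)^2\bigr)$ termwise and pass to $P(X\leq \varepsilon) = 1 - P(X>\varepsilon)$.

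Next I would derive a perturbation identity. Writing $\Delta = \hat{\mc} - \mc$ and combining $\hat{\mc}\hat{\vv}_j = \hat{\lambda}_j\hat{\vv}_j$ with $\mc\vv_i = \lambda'_i\vv_i$ yields
\begin{equation*}
(\hat{\lambda}_j - \lambda'_i)\,\vv_i^\He\hat{\vv}_j = \vv_i^\He\Delta\hat{\vv}_j.
\end{equation*}
Applying Cauchy--Schwarz with $\|\hat{\vv}_j\|=1$ and Hermiticity of $\Delta$ gives $|\vv_i^\He\Delta\hat{\vv}_j|^2 \leq \|\Delta\vv_i\|^2$, so
\begin{equation*}
|\vv_i^\He\hat{\vv}_j|^2 (\hat{\lambda}_j-\lambda'_i)^2 \leq \|\Delta\vv_i\|^2 .
\end{equation*}
A sin-theta style argument then replaces the sample eigenvalue $\hat{\lambda}_j$ by its population counterpart at the cost of a factor two in the gap, producing the factor $4$ after squaring. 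This is where the sign condition in the hypothesis enters: it enforces that $\lambda'_i$ and $\lambda'_j$ lie on strictly opposite sides of their midpoint $(\lambda'_i+\lambda'_j)/2$, so a Weyl-type inequality $|\hat{\lambda}_j - \lambda'_j| \leq \|\Delta\|$ can be used to guarantee $|\hat{\lambda}_j - \lambda'_i| \geq |\lambda'_j - \lambda'_i|/2$, leaving $(\lambda'_j-\lambda'_i)^2/4$ in the denominator.

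Finally I would close out the variance estimate. Since $\Delta = \tfrac{1}{N}\sum_{n=1}^N\bigl(\vy_n\vy_n^\He - \mc\bigr)$ with i.i.d.\ zero-mean summands, independence kills the cross terms and
\begin{equation*}
\mathbb{E}\!\left[\|\Delta\vv_i\|^2\right] = \tfrac{1}{N}\!\left(\mathbb{E}[\|\vy\vy^\He\vv_i\|^2] - \|\mc\vv_i\|^2\right) = \tfrac{k_i^2}{N},
\end{equation*}
using $\|\mc\vv_i\|^2 = \lambda_i'^{\,2}$ and the definition of $k_i^2$. Plugging back gives $\mathbb{E}[|\hat{\vv}_j^\He\vv_i|^2]\leq 4k_i^2/\bigl(N(\lambda'_j-\lambda'_i)^2\bigr)$, and substituting into the Markov bound yields the claim.

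The main obstacle is the transfer of the denominator from $\hat{\lambda}_j$ (random) to $\lambda'_j$ (deterministic) with an honest factor of only $4$. This cannot be done pointwise across all sample paths because $\hat{\lambda}_j$ can occasionally drift close to $\lambda'_i$, so the argument needs either a Davis--Kahan--Wedin bound formulated directly in terms of population eigenvalues or a careful expectation-level manipulation of the inequality $|\vv_i^\He\hat{\vv}_j|^2(\hat{\lambda}_j-\lambda'_i)^2 \leq \|\Delta\vv_i\|^2$ that exploits the sign/gap hypothesis to lower-bound the eigengap before taking expectations. Once this step is in place, the remaining ingredients (Markov and the i.i.d.\ variance calculation) are routine.
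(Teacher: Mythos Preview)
The paper does not prove this corollary; it is quoted directly from \cite[Corollary~4.1]{Loukas2017} and then applied as a black box inside the proof of Theorem~\ref{lem:mse_bound}. So there is no in-paper argument to compare your proposal against.

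On its own merits, your sketch follows the natural perturbation-theoretic route: the identity $(\hat{\lambda}_j-\lambda'_i)\,\vv_i^\He\hat{\vv}_j=\vv_i^\He\Delta\hat{\vv}_j$, the Cauchy--Schwarz step $|\vv_i^\He\Delta\hat{\vv}_j|\leq\|\Delta\vv_i\|$, and the i.i.d.\ second-moment computation $\mathbb{E}[\|\Delta\vv_i\|^2]=k_i^2/N$ are all correct and are precisely the ingredients used in the cited reference. The step you flag---replacing the random gap $(\hat{\lambda}_j-\lambda'_i)^2$ by the deterministic $(\lambda'_j-\lambda'_i)^2/4$ before or while taking expectations---is indeed the only non-routine point, and your diagnosis that a pointwise Weyl bound $|\hat{\lambda}_j-\lambda'_j|\leq\|\Delta\|$ cannot simply be folded into the expectation is accurate: $\hat{\lambda}_j$ and $\|\Delta\vv_i\|^2$ are correlated, and $\|\Delta\|$ is not deterministically below half the eigengap. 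As written, your proposal is a correct outline with one honestly identified gap at the eigengap transfer; closing it requires the specific argument of \cite{Loukas2017}, which the present paper simply invokes rather than reproduces.
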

In our given problem, the true covariance matrix is given as $\mc_{\vy\mid\mh} = \mh\mh^\He/J + \sigma^2\I_M$.
It is then trivial to see that $\lambda^\prime_i = \sigma^2, \forall i > J$.
Furthermore it is shown that~\cite[Corollary 4.3]{Loukas2017}
\begin{align}
    k_i^2 = \lambda^\prime_i\left(\lambda^\prime_i+\tr(C_{\vy\mid\mh})\right). 
\end{align}
Hence, we can conclude
\begin{align}
    &P\left(\E_{\mx,\mn}[\|\hat{\vh}_\proj(\hat{\mv}) - \hat{\vh}_\proj({\mv})\|^2]\leq \varepsilon\right) \nonumber\\
    &\geq 1 - \frac{4 k^2(M-J)\lambda_{\max}(\mw_\proj^\He\mw_\proj)}{J^2N\varepsilon} \sum_{j \leq J}\frac{|s_{\max}|^2+2\sigma^2}{\lambda_j^2},
\end{align}
with $\lambda_j = (\lambda^\prime_j - \sigma^2)/J$ for all $j \in \{1,\dots,J\}$ being the eigenvalues of $\mh\mh^\He$ and
\begin{align}
    k^2 = \sigma^2\left(\sigma^2+\tr(C_{\vy\mid\mh})\right).
\end{align}
\qed
}%

\balance
\bibliographystyle{IEEEtran}
  
\bibliography{IEEEabrv,mybib}

\end{document}